\let\originalleft\left
\let\originalright\right
\renewcommand{\left}{\mathopen{}\mathclose\bgroup\originalleft}
\renewcommand{\right}{\aftergroup\egroup\originalright}
\newtheorem{theorem}{Theorem}
\newtheorem{corollary}[theorem]{Corollary}
\newtheorem{proposition}[theorem]{Proposition}
\theoremstyle{definition}
\DeclareMathOperator{\be}{H}
\DeclareMathOperator{\poly}{poly}
\title{Quantum Speedups for Exponential-Time \\ Dynamic Programming Algorithms\thanks{This work is supported by the ERC Advanced Grant MQC and Latvian State Research Programme NexIT Project No.~1.}}
\date{}
\begin{document}

\renewcommand\Affilfont{\small}

\author{Andris Ambainis}
\author{Kaspars Balodis}
\author{J\={a}nis Iraids}
\author{\authorcr Martins Kokainis}
\author{Kri\v{s}j\={a}nis Pr\={u}sis}
\author{Jevg\={e}nijs Vihrovs}
\affil{Centre for Quantum Computer Science, Faculty of Computing, \authorcr University of Latvia, Rai\c{n}a 19, Riga, Latvia, LV-1586.}

\maketitle

\begin{abstract}
In this paper we study quantum algorithms for NP-complete problems whose best classical algorithm is an exponential time application of dynamic programming. We introduce the path in the hypercube problem that models many of these dynamic programming algorithms. In this problem we are asked whether there is a path from $0^n$ to $1^n$ in a given subgraph of the Boolean hypercube, where the edges are all directed from smaller to larger Hamming weight.  We give a quantum algorithm that solves path in the hypercube in time $O^*(1.817^n)$. The technique combines Grover's search with computing a partial dynamic programming table. We use this approach to solve a variety of vertex ordering problems on graphs in the same time $O^*(1.817^n)$, and graph bandwidth in time $O^*(2.946^n)$. Then we use similar ideas to solve the travelling salesman problem and minimum set cover in time $O^*(1.728^n)$.
\end{abstract}

\section{Introduction}

Grover's quantum search \cite{Grover96} achieves a quadratic speedup over classical exhaustive search, but often naive exhaustive search is not the fastest classical algorithm. For example, for the Travelling Salesman Problem (TSP), exhaustive search over all the salesman’s routes runs in time $O^*(n!)$\footnote{The $O^*(f(n))$ notation hides a polynomial factor in $n$.} and Grover’s quantum search speeds this up to $O^*(\sqrt{n!})$. However, this is slower than the best classical algorithm which runs in time $O^*(2^n)$ and is based on dynamic programming \cite{Bel62,HK62}. This situation is quite typical: there are many classical algorithms for NP-complete problems that run in exponential time \cite{FK10} but are faster than a simple application of quantum search.

Can we combine quantum effects with classical problem-solving strategies for NP-complete problems? In this paper, we look at this question for dynamic programming algorithms. We find this interesting for at least two reasons. First, given the many applications of dynamic programming, this can lead to quantum speedups for many computational tasks. Second, there is something inherently sequential in the way dynamic programming solves problems, from smaller to larger subproblems. Such algorithms often store the solutions to all subproblems, which is a serious obstacle in quantizing them. Finding a quantum algorithm that improves over this is an interesting technical challenge.

The focus of this paper is on dynamic programming algorithms which solve subproblems that correspond to 
subsets of an $n$-element set, in order from smaller to larger subsets. Since there are $2^n$ subsets of 
an $n$-element set, this typically results in $O^*(2^n)$ time classical algorithms. 

A well-known example of an algorithm in this class is the famous Bellman-Held-Karp $O^*(2^n)$ time algorithm for the
Travelling Salesman Problem \cite{Bel62,HK62} (which is still the best known algorithm for this problem,
despite being more than 50 years old). It builds up optimal paths through subsets $S$ of the set of vertices, 
using knowledge of optimal paths for the subsets that contain one vertex less.
Note that if the edge weights are polylogarithmic in $n$, there is a more efficient $O^*(1.657^n)$ time algorithm \cite{Bjo14}.

\subsection{Our Results}

Inspired by the Bellman-Held-Karp algorithm, we first define a generic problem which captures this type of strategies for
dynamic problems, which we call Path in the Hypercube.
In this problem, we are given a subgraph $G$ of the Boolean hypercube, in which the edges are directed from smaller to larger Hamming weight.
The task is to determine whether $0^n$ and $1^n$ are connected by a path in $G$.

This problem can be solved classically in time $O^*(2^n)$ using dynamic programming, since a vertex $y$ is reachable from $0^n$ iff there is a path from $0^n$ to some $x$ such that there is an edge from $x$ to $y$.
Hence a quantum algorithm for this problem can be used to speed up classical Bellman-Held-Karp type algorithms that run in time $O^*(2^n)$.

We give a quantum algorithm for the Path in the Hypercube with running time $O^*(1.817^n)$.
We then use this algorithm to solve a variety of dynamic programming problems, including many graph vertex ordering problems, in the same time $O^*(1.817^n)$, while the best classical running time for these problems is $O^*(2^n)$.

Then we consider a variation of the Path in the Hypercube in which at most $\mu^n$ vertices of the hypercube are marked as available for the path, where $\mu$ is some constant.
We obtain a quantum speedup for this problem when $\mu \geq 1.735$ and then use it to solve Graph Bandwidth in time $O^*(2.945^n)$.
The best known classical algorithm for this problem has running time $O^*(4.383^n)$.

Next, we consider the subclass of problems where the solution to a problem that involves a set $S$ of size $n$ can be calculated by considering all partitions of $S$ into two sets of size $k$ and $n-k$ and choosing the best option, for any fixed positive $k$.
For example, TSP admits a decomposition of this type, as subpaths of an optimal path should also be optimal.
We show a quantum algorithm for TSP with running time $O^*(1.728^n \log L)$, where $L$ is the maximum edge weight.
Then we adapt this algorithm to solve Minimum Set Cover (MSC) and Feedback Arc Set in the same time $O^*(1.728^n)$.

As mentioned earlier, the fastest known classical algorithms for TSP on general graphs are the Bellman-Held-Karp algorithm that runs in time $O(n^2 2^n \log L)$ and Björklund's algorithm that runs in time $O^*(1.657^n L)$ \cite{Bjo14}.
For MSC, there are classical algorithms with running times $O(nm2^n)$, $O(n2^m)$ and $O(1.227^{n+m})$ \cite{RB08}.
Note that a trivial application of Grover's search results in an $O^*(\sqrt{2^m})$ quantum algorithm for MSC.
The fastest known classical algorithm for Feedback Arc Set runs in time $O^*(2^n)$.

Table \ref{tbl:res} lists the running times of the algorithms given in this paper.
All the algorithms require exponential space as they need to store a partial dynamic programming table.
The algorithms return the correct answer with probability at least $2/3$.

\begin{table}[H]
\begin{center}
\begin{tabular}{c | c | c}
& Classical (best known) & Quantum (this paper)\\ \hline
Path in the Hypercube & $O(n 2^n)$ & $O^*(1.817^n)$\\
Vertex Ordering Problems & $O^*(2^n)$ \cite{Bodlaender2012} & $O^*(1.817^n)$\\
Graph Bandwidth & $O^*(4.383^n)$ \cite{CP10} & $O^*(2.946^n)$ \\
Travelling Salesman Problem & $O(n^2 2^n)$ \cite{Bel62, HK62} & $O^*(1.728^n)$\\
Feedback Arc Set & $O^*(2^n)$ \cite{Bodlaender2012} & $O^*(1.728^n)$ \\
Minimum Set Cover & $O(nm 2^n)$ \cite{FK10} & $O(\poly(m,n)1.728^n)$
\end{tabular}
\end{center}
\caption{Summary of the results.} \label{tbl:res}
\end{table}

\subsection{Techniques}

The main idea of our algorithms is to precompute solutions for a part of the subsets using dynamic programming, and then use Grover's search on the rest of the subsets to find the answer to the problem.

As a warmup, consider the following simple quantum algorithm that checks whether a graph $G = (V,E)$ has a Hamiltonian cycle.
Let $f(S,u,v)$ be true iff there is a simple path that goes through all vertices of $S$, starts at $u$ and ends at $v$.
The algorithm first precomputes $f(S,u,v)$ for all $S$ such that $|S| = \frac{n}{4}+1$ using dynamic programming.
Then it runs Grover's search over all $S$ such that $|S| = \frac{n}{2}+1$ and $u, v \in S$ and checks whether $f(S,u,v)$ and $f((V \setminus S) \cup \{u, v\}, v, u)$ are both true (in that case a Hamiltonian cycle exists).
To find $f(S,u,v)$ for $|S| =\frac{n}{2}+1$, the algorithm runs another Grover's search over all subsets $T$ of $S$ such that $|T| = \frac{n}{4}+1, u \in T, v \notin T$ and vertices $t \in T$ to check whether $f(T,u,t)$ and $f((S\setminus T) \cup t,t,v)$ are both true (which we know from the classical preprocessing).
The complexity of this algorithm is $$O^*\left({n \choose \frac{n}{4}+1}+\sqrt{{n \choose \frac{n}{2}+1} {\frac{n}{2}+1 \choose \frac{n}{4}+1}}\right) = O^*\left(1.755^n\right).$$
This algorithm can be immediately extended for TSP by replacing Grover's search with quantum minimum finding \cite{DH96}.
A slight optimization of this algorithm gives complexity $O^*(1.728^n)$ which we describe in Section \ref{sec:tsp}.

\subsection{Prior Work}

Achieving a quantum advantage over classical dynamic programming algorithms has been a known problem in the quantum
algorithms community, with no substantial results on it up to now.

The first quantum improvement for TSP with an upper bound on its running time was given by \cite{MGA16}, who showed that for graphs with maximum degree $k$, there is a quantum algorithm that runs in time $O(2^{(k-2)n/4})$.
Quantum speedups for the Travelling Salesman Problem in graphs with small maximum degree were further studied by \cite{MLM95}.
They showed algorithms that solve TSP on degree-3 graphs in time $O^*(1.110^n)$, degree-4 graphs in $O^*(1.301^n)$, degree-5 and degree-6 graphs in $O^*(1.680^n)$ time.
In this case, the best classical algorithm is based on a different method, {\em backtracking} which performs a depth-first search on
a search tree of an unknown structure. The quantum algorithm follows by applying a result of Montanaro \cite{Montanaro15} 
who constructed a quantum backtracking algorithm with a nearly quadratic advantage over its classical counterparts (with further developments in \cite{AK17}). 
This work has no implications for TSP on general graphs.

\section{Preliminaries}

\paragraph{Notation.} If $f(n) = O(n^c)$ for some constant $c$, we will write $f(n) = \poly(n)$.
If $f$ polynomially depends on two parameters $m, n$, we will use $f(n) = \poly(m,n)$.
In case $f(n) = \poly(n)\cdot c^n$ for some constant $c$, we use notation $f(n) = O^*(c^n)$.
If $f(n) = O((\log n)^c f(n))$ for some constant $c$, we will write $\widetilde O(f(n))$.

\paragraph{Problems.}

Here we formally define the four problems we mainly focus on in this paper.

\begin{itemize}
\item \textbf{Path in the Hypercube.} 
Let the directed hypercube graph $Q_n$ be the graph formed by vertices numbered by $\{0,1\}^n$ and edges $x \to y$ for all $x, y$ such that $y$ is obtained from $x$ by changing one bit $x_i$ from 0 to 1.
In \textsc{Hypercube Path} we are given a query access to a subgraph $G$ of $Q_n$ (with queries that answer whether an edge $x\to y$ of $Q_n$ is present in $G$), and asked whether there is a path from $0^n$ to $1^n$.

\item \textbf{Graph Bandwidth.}
In the \textsc{Bandwidth} problem we are asked to find a linear ordering of the graph vertices such that the length of the maximum edge is minimized.
Formally, for a given graph $G = (V,E)$ on $n$ vertices, the task is to find
$$\min_{\pi \in S_n} \left\{ \max_{\{u, v\} \in E} \{ |\pi_u - \pi_v| \} \right\},$$
where $S_n$ is the set of all permutations on $n$ numbers.

\item \textbf{Travelling Salesman Problem.}
In \textsc{Travelling Salesman} we are given a weighted graph $G$ with $n$ nodes, and the task is to find the length of the shortest simple cycle that visits each vertex.

\item \textbf{Minimum Set Cover.}
In \textsc{Set Cover} the input is a collection $\mathcal S$ of subsets of an element universe $\mathcal U$ (denote $|\mathcal U| = n$ and $|\mathcal S| = m$), and the task is to find the minimum cardinality of a subset $\mathcal S' \subseteq \mathcal S$ such that $$\bigcup_{S \in \mathcal S'} S = \mathcal U.$$
\end{itemize}

\paragraph{Model.} Our algorithms work in the commonly used
QRAM (quantum random access memory) model of computation \cite{GLM08}
which assumes quantum memory which can be accessed in a superposition. 
QRAM has the property that any time-$T$ classical algorithm that uses random access memory
can be invoked as a subroutine for a quantum algorithm in time $O(T)$.
We can thus use primitives for quantum search (e.g., Grover's quantum search or minimum finding) 
with conditions checking which requires data stored in a random access memory.

\paragraph{Tools.} We will use the following well-known results in our algorithms.

\begin{theorem}
\label{thm:vts}
[Variable Time Search \cite{Amb10}]
Let $\mathcal A_1, \ldots, \mathcal A_n$ be quantum algorithms that return true or false and run in unknown times $T_1, \ldots, T_n$, respectively.
Suppose that each $\mathcal A_i$ outputs the correct answer with probability at least $2/3$.
Then there exists a quantum algorithm with success probability at least $2/3$ that checks whether at least one of $\mathcal A_i$ returns true and runs in time
$$\widetilde O\left(\sqrt{T_1^2+\ldots+T_n^2}\right).$$
\end{theorem}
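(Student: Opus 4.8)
The plan is to reduce the claim to Ambainis's \emph{variable-time amplitude amplification} and to establish that primitive by a recursion over time scales. First I would boost each $\mathcal{A}_i$: running $\mathcal{A}_i$ $O(\log n)$ times and taking a majority drives its error below $n^{-10}$ while multiplying its running time by only a polylog factor, so from now on I treat the $\mathcal{A}_i$ as essentially exact. Next, assemble a single \emph{selector} routine $\mathcal{A}$ that prepares the uniform superposition $\frac{1}{\sqrt n}\sum_{i=1}^n |i\rangle$ over a control register and then runs $\mathcal{A}_i$ on a fresh workspace controlled by $i$, writing the Boolean answer into a flag qubit. If some $\mathcal{A}_i$ accepts, the flag-is-true subspace has squared amplitude at least $1/n$; if none does, it is $0$. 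So it suffices to amplify that subspace to constant amplitude and measure. The crux is that $\mathcal{A}$ is not a fixed-time algorithm: branch $i$ halts after time $T_i$, and since $i$ is uniform the probability that the computation has halted within time $t$ is exactly $|\{i : T_i \le t\}|/n$. Plain amplitude amplification would run every branch to time $\max_i T_i$ on each of its $\Theta(\sqrt n)$ iterations, giving the useless bound $\sqrt n\cdot\max_i T_i$.

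To avoid this I would run the amplification in stages aligned with the time scales $2^0,2^1,\dots,2^L$, where $L=\lceil\log_2\max_i T_i\rceil$; note $L = O(\log\sqrt{\sum_i T_i^2})$, so any $\mathrm{poly}(L)$ overhead is absorbed by the $\widetilde O$. Round each $T_i$ up to a power of two (this changes $\sqrt{\sum_i T_i^2}$ by at most a factor $2$) and set $n_\ell=|\{i:T_i=2^\ell\}|$. The construction is recursive: $\mathcal{B}_\ell$ runs the computation only up to time $2^\ell$ --- branches that have already halted simply idle, which is visible from the flag --- and then performs $k_\ell$ rounds of amplitude amplification on top of $\mathcal{B}_{\ell-1}$; a branch that halted at a scale below $\ell$ is never recomputed, so its amplitude is extracted cheaply. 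Inductively I would track (i) the running time $\mathcal{T}_\ell$ of $\mathcal{B}_\ell$, which obeys $\mathcal{T}_\ell \le k_\ell(\mathcal{T}_{\ell-1}+2^\ell)$, and (ii) the amplitude currently carried by the ``good or still-alive'' part, choosing $k_\ell$ so that the already-halted-and-good mass is kept near a constant while the alive mass shrinks proportionally. Unrolling the recursion and applying Cauchy--Schwarz in the form $\sum_\ell \sqrt{n_\ell}\,2^\ell \le \sqrt{L+1}\,\sqrt{\sum_\ell n_\ell 4^\ell}$ yields $\mathcal{T}_L=\widetilde O\!\left(\sqrt{\sum_i T_i^2}\right)$ when the initial good amplitude is at least $1/\sqrt n$. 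Finally measure; to cover the case where the success amplitude might be small, wrap the whole procedure in the standard geometrically-increasing-iterations outer loop so that ignorance of $\#\{i:\mathcal{A}_i\text{ accepts}\}$ and of the $T_i$ costs only another polylog factor, and a constant number of majority-vote repetitions pushes the success probability to $2/3$.

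I expect the main obstacle to be the cross-level amplitude bookkeeping: one must pin down the ``good'' and ``alive'' subspaces at each stage, show that the chosen $k_\ell$ neither under- nor over-rotates (the usual hazard of amplitude amplification without knowing the amplitude, here compounded by not knowing the $n_\ell$), and bound the accumulation of the three error sources --- truncating branches that have not yet halted, imperfect reflections about the output state of $\mathcal{B}_{\ell-1}$, and the residual $n^{-10}$ error of the boosted $\mathcal{A}_i$ --- so that they remain below a constant even after the $\widetilde O(\sqrt n)$-fold magnification. Making the time bound come out as $\sqrt{\sum_i T_i^2}$ rather than $\sqrt{\sum_i T_i^2}\cdot\mathrm{poly}(L)$ also hinges on applying the Cauchy--Schwarz step at the right place and balancing the $k_\ell$ against the $n_\ell$, instead of using a crude per-level bound.
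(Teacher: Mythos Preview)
The paper does not prove this statement at all: Theorem~\ref{thm:vts} appears in the Preliminaries under ``Tools'' and is simply \emph{quoted} from \cite{Amb10} as a black box, with no proof given. So there is no ``paper's own proof'' to compare against; the paper treats variable-time search as a known primitive and only \emph{applies} it (in Section~\ref{sec:forbidden} and Section~\ref{sec:band}).

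That said, your sketch is a faithful outline of how the result is actually established in \cite{Amb10}: the reduction to a single selector $\mathcal A$ with branch-dependent stopping times, the level-by-level amplitude amplification at geometrically increasing time thresholds $2^\ell$, the recursion $\mathcal T_\ell \le k_\ell(\mathcal T_{\ell-1}+2^\ell)$, and the final Cauchy--Schwarz step $\sum_\ell \sqrt{n_\ell}\,2^\ell \le \sqrt{L+1}\sqrt{\sum_\ell n_\ell 4^\ell}$ are exactly the ingredients Ambainis uses. You have also correctly identified the genuine technical pain point --- controlling the amplification counts $k_\ell$ without knowing the $n_\ell$ and keeping the cumulative rotation error bounded across $L$ levels --- which in \cite{Amb10} is handled by a careful choice of the $k_\ell$ together with the standard exponential-guessing wrapper. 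For the purposes of \emph{this} paper, though, none of that work is needed: you may cite the theorem and move on.
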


If we search among $n$ elements, each of which can be checked quantumly in known time $T$, we need time $O(T \sqrt n)$ (without polylogarithmic factors).
In that case we will simply say that we perform Grover's search over these $n$ elements \cite{Grover96}.

\begin{theorem}[Quantum Minimum Finding \cite{DH96}]
Let $a_1, \ldots, a_n$ be integers, accessed by a procedure $\mathcal P$.
There exists a quantum algorithm that finds $\min_{i=1}^n \{a_i\}$ with success probability at least $2/3$ using $O(\sqrt n)$ applications of $\mathcal P$.
\end{theorem}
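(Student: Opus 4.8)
The plan is to follow the minimum-finding algorithm of \cite{DH96}, which reduces the problem to a chain of Grover searches, each looking for an index whose value is strictly smaller than that of the current candidate. First I would recall the standard variant of Grover's search with an \emph{unknown} number of marked items (the BBHT routine): if there are $t$ marked items among $n$, it returns one of them, uniformly at random among the marked items, using $O(\sqrt{n/t})$ queries in expectation, and if there are none it detects this after $O(\sqrt n)$ queries. The algorithm is then: pick $y \in \{1,\dots,n\}$ uniformly at random; repeatedly run the BBHT routine with the set of marked indices $\{\, j : a_j < a_y \,\}$ (breaking ties by index so the marked set is well-defined), and whenever it returns some $y'$ set $y \leftarrow y'$; stop once a total of $C\sqrt n$ applications of $\mathcal P$ have been used, and output the current $y$. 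Each Grover iteration costs $O(1)$ applications of $\mathcal P$ — one to recompute $a_y$ and compare it with the queried $a_j$ — so the whole procedure uses $O(\sqrt n)$ applications of $\mathcal P$, as required.

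For correctness I would track the \emph{rank} of the current candidate in sorted order. From a candidate of rank $r$ the BBHT routine returns a candidate of rank uniform in $\{1,\dots,r-1\}$ and costs $O(\sqrt{n/(r-1)})$ queries in expectation; once the candidate has rank $1$ it equals $\min_i a_i$ and the next search finds nothing. Let $q_r$ be the probability that a candidate of rank exactly $r$ ever appears. Since the initial candidate has rank $r$ with probability $1/n$ and from any rank $j>r$ we move to rank $r$ with probability $1/(j-1)$, we get $q_r = \tfrac1n + \sum_{j=r+1}^{n} q_j/(j-1)$; a downward induction (the telescoping sum $\sum_{j=r+1}^n(\tfrac1{j-1}-\tfrac1j)$) then yields the clean identity $q_r = 1/r$. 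Hence the expected total number of queries is $\sum_{r\ge 2} q_r\cdot O(\sqrt{n/(r-1)}) + q_1\cdot O(\sqrt n) = O(\sqrt n)\bigl(1 + \sum_{r\ge 2} \tfrac{1}{r\sqrt{r-1}}\bigr) = O(\sqrt n)$, because the last series converges. By Markov's inequality, choosing $C$ large enough ensures that with probability at least $3/4$ the budget of $C\sqrt n$ queries is not exhausted before the candidate reaches rank $1$, in which case the output is exactly $\min_i a_i$. Running the procedure twice (or inflating $C$) boosts the success probability above $2/3$.

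The only real obstacle is the probabilistic accounting in the second paragraph: bounding the expected cost of the \emph{entire} chain of searches rather than a single one. Everything hinges on the identity $q_r = 1/r$ together with the convergence of $\sum_r r^{-1}(r-1)^{-1/2}$; the remaining ingredients — correctness of BBHT, the uniformity of the marked index it returns, and converting an expected-time algorithm into a fixed-time one via Markov plus amplification — are standard.
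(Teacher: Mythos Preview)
The paper does not prove this theorem: it is quoted in the Preliminaries as a black-box tool from \cite{DH96}, with no accompanying argument. So there is nothing in the paper to compare your proof against.

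That said, your sketch is exactly the D\"urr--H{\o}yer argument and is correct. The recurrence $q_r = \tfrac1n + \sum_{j>r} q_j/(j-1)$ does telescope to $q_r = 1/r$, and $\sum_{r\ge 2} r^{-1}(r-1)^{-1/2}$ converges, so the expected total query cost is $O(\sqrt n)$; Markov then turns this into a fixed-budget algorithm with constant success probability. Two small points worth making explicit if you write this out in full: (i) when BBHT returns an index $y'$ you should verify $a_{y'} < a_y$ with one more call to $\mathcal P$ before updating, since BBHT can fail and hand back an unmarked index --- this is what justifies the ``uniformly random smaller rank'' transition you use; and (ii) the independence needed for $E[X_r C_r] = q_r\, E[C_r]$ (visiting rank $r$ depends only on randomness before that stage, while the cost $C_r$ depends only on fresh randomness in that stage's BBHT call) is genuinely used and deserves a sentence. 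Neither is a gap in the argument, just hygiene.
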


Suppose that $a_i$ is given by the output of a subroutine $\mathcal A_i$, which is correct with probability at least $2/3$.
In that case we will implement $\mathcal P$ as $O(\log n)$ repetitions of $\mathcal A_i$, to reduce the error probability of each $\mathcal P$ call to $O(1/\sqrt n)$.
Then with probability at least $2/3$, all the calls to $\mathcal P$ will be correct, and quantum minimum finding will also be correct with probability at least $2/3$, resulting in a total $O(\sqrt n \log n)$ runtime.

For the running time analysis of exponential time algorithms, we use the following approximation of binomial coefficients:
\begin{theorem}[Entropy Approximation] \label{entropy}
For all $1 \leq k \leq n/2$,
$${n \choose \leq k} = \sum_{i=0}^k {n \choose k} \leq 2^{\be(k/n) \cdot n},$$
where $\be(\epsilon) = -(\epsilon \log_2(\epsilon) +  (1-\epsilon) \log_2(1-\epsilon))$ is the binary entropy of $\epsilon \in [0,1]$.
\end{theorem}

For $k>n/2$, we use the trivial approximation ${n \choose \leq k}\leq 2^n$.
For single binomial coefficients, we use ${n \choose k} \leq 2^{\be(k/n) \cdot n}$ (which applies to all $1 \leq k \leq n$).

\section{Path in the Hypercube}

In this section, we describe the quantum algorithm for finding a path in a subgraph $G$ of the hypercube
(which models many vertex ordering problems, as described in Section \ref{sec:order}).
Let $\mathcal P_{\text{edge}}$ be the procedure that checks whether there is an edge between two vertices $x, y \in \{0, 1\}^n$ in $G$, and assume that one query to $\mathcal P_{\text{edge}}$ requires $\poly(n)$ time.

\begin{theorem}
There is a bounded-error quantum algorithm that solves \textsc{Hypercube Path} in time $O^*(1.817^n)$.
\end{theorem}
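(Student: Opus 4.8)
**Plan for proving the $O^*(1.817^n)$ bound for Hypercube Path.**

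The plan is to combine classical dynamic programming on low-Hamming-weight vertices with nested Grover search on higher-weight vertices, splitting the hypercube at the middle layer.

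First, I would fix two levels $0 \le a \le n/2$ and the middle level $n/2$. The reachability relation decomposes: $1^n$ is reachable from $0^n$ iff there is some vertex $z$ at level $n/2$ such that $z$ is reachable from $0^n$ and $1^n$ is reachable from $z$. By symmetry (reversing the hypercube and complementing coordinates), the predicate "$1^n$ reachable from $z$'' is an instance of the same reachability question on the sub-hypercube between $z$ and $1^n$, so it suffices to design a subroutine $R(z)$ that decides, for a level-$n/2$ vertex $z$, whether $z$ is reachable from $0^n$, and then Grover-search over the $\binom{n}{n/2}$ choices of $z$, applying $R$ at both ends.

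Next I would implement $R(z)$ itself recursively with one more split. Classically precompute, by the standard $O^*(2^{\be(a/n)n})$ dynamic program, the full reachability table from $0^n$ to every vertex of Hamming weight $\le a$. To decide whether $z$ (weight $n/2$) is reachable, Grover-search over all weight-$a$ vertices $w$ with $w \le z$ (coordinatewise), of which there are $\binom{n/2}{a}$, and for each such $w$ check that $w$ is reachable from $0^n$ (known from the table) and that $z$ is reachable from $w$ inside the interval $[w,z]$, which is a sub-hypercube of dimension $n/2 - a$. That inner reachability call on a dimension-$(n/2-a)$ cube can be handled by yet another layer of the same trick, or ultimately by brute-force/DP at cost $O^*(2^{n/2-a})$ once the dimension is small; the cleanest version is to recurse and set up the recurrence for the exponent. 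Collecting the costs, the running time is of the form
\[
O^*\!\left( 2^{\be(a/n)\,n} \;+\; \sqrt{\binom{n}{n/2}}\cdot\sqrt{\binom{n/2}{a}}\cdot T\!\left(\tfrac{n}{2}-a\right)\right),
\]
and one balances the preprocessing term against the search term by choosing $a = \alpha n$ optimally; I expect the optimum to be around $\alpha \approx 0.055$, yielding base $1.817$.

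The main obstacle is organizing the recursion and the error probabilities cleanly: each Grover layer multiplies the costs, each recursive reachability query is itself a bounded-error quantum subroutine, and nesting Grover's search (or variable-time search, Theorem~\ref{thm:vts}) inside Grover's search requires either boosting success probabilities by $O(\log)$-repetition at each level or invoking Theorem~\ref{thm:vts} to avoid paying for the worst-case running time of the inner calls. I would therefore state the subroutine's guarantee as "bounded-error, running time $O^*(c^{d})$ on a dimension-$d$ cube'' and verify that the polynomial and polylogarithmic overheads from $O(\log)$ iterations at $O(1)$ nesting depth stay inside the $O^*$. The secondary obstacle is the optimization itself: writing the exponent as a function of the single parameter $\alpha$ (after checking that splitting exactly at the middle layer is optimal, which follows from the symmetry and convexity of $\be$), and numerically solving for the minimizer to get the claimed constant $1.817$.
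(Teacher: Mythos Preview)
Your high-level structure matches the paper's: precompute reachability classically up to some level $\alpha n$, Grover-search over the middle layer, and for each middle vertex $z$ Grover-search over intermediate vertices $w\preceq z$, checking the $w\to z$ gap by recursing on the small subcube $[w,z]$. However, the quantitative analysis is off in two places, and as written the plan does \emph{not} reach $1.817^n$.

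First, the value $\alpha\approx 0.055$ is the parameter from the \textsc{Travelling Salesman} algorithm (Section~\ref{sec:tsp}), whose recursion has a different shape (balanced halving at each of three levels). Plugging $\alpha=0.055$ into your recurrence gives preprocessing cost $2^{\be(0.055)n}\approx 1.24^n$ but a search cost with exponent $(1/2+\be(2\alpha)/4)/(1/2+\alpha)\approx 1.13$, i.e.\ roughly $2.18^n$; the two terms are nowhere near balanced. If you actually optimise your single-intermediate-level recurrence
\[
\gamma^n \;=\; \max\Bigl(2^{\be(\alpha)n},\ \sqrt{\tbinom{n}{n/2}\tbinom{n/2}{\alpha n}}\,\gamma^{(1/2-\alpha)n}\Bigr),
\]
the balance point is $\alpha\approx 0.316$, giving $\gamma=2^{\be(\alpha)}\approx 2^{0.90}\approx 1.87$, not $1.817$.

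Second, and this is the missing idea: to push the base below $1.87$ the paper inserts \emph{several} intermediate levels $\alpha_1<\alpha_2<\cdots<\alpha_k<1/2$ (with $k=6$), so that $\textbf{Reachable}_{i+1}$ Grover-searches over level-$\alpha_i n$ predecessors and makes the recursive \textbf{HypercubePath} call only on the tiny subcube of dimension $(\alpha_{i+1}-\alpha_i)n$. The resulting system (Eq.~(\ref{eq:req})) lets the recursive-call terms $\gamma^{(\alpha_{i+1}-\alpha_i)n}$ become cheap, with the cost absorbed into the chained Grover factors $\sqrt{\binom{\alpha_{i+1}n}{\alpha_i n}}$. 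Solving it yields $\alpha_1\approx 0.284$ (not $0.055$) and $\gamma\approx 1.817$. Your ``one more split'' is simply the $k=1$ instance of this scheme; you need the multi-level chaining to reach the claimed constant. A minor related point: the recursion depth is $O(\log n)$ (each recursive subcube shrinks by a constant factor), so the accumulated polynomial overhead is $n^{O(\log n)}$, which is still absorbed by $O^*$, but it is not $O(1)$.
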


Let $k \geq 2$ be a parameter of the algorithm and $0 < \alpha_1 < \alpha_2 < \ldots < \alpha_k < \alpha_{k+1} = 1/2$ be constants to be defined later.
Let $A_i$ be the set of vertices with Hamming weight $\lfloor\alpha_i n\rfloor$.
For two Boolean strings $x$ and $y$, we write $x \preceq y$ iff $x_i \leq y_i$ for all $i \in [n]$.

\begin{algorithm}[H]
\textbf{HypercubePath}(subgraph $G$ of $Q_n$): whether $0^n$ and $1^n$ are connected by a directed path.

\begin{enumerate}
\item \label{itm:hc1} If $\lfloor\alpha_in\rfloor = \lfloor\alpha_{i+1}n\rfloor$ for some $i$, run the classical dynamic programming algorithm and return the answer.
\item Otherwise, perform a classical preprocessing, where we classically find whether each vertex $x \in A_1$ is reachable from $0^n$ using dynamic programming. The answer is given by a recursive formula $$r(x) = \bigvee_{i : x_i=1} \left(r(x^i) \land (x^i,x) \in G\right), \hspace{1cm} r(0^n) = \text{true},$$ where $x^i$ is $x$ with its $i$-th bit flipped.
Store the values $r(x)$ for all $x \in A_1$ in memory. \\
We perform a similar preprocessing to also compute from which vertices with Hamming weight $n-\lfloor\alpha_1n\rfloor$ we can reach $1^n$.
\item
Let $\textbf{Path}_b(x), b\in\{0, 1\}$ be a quantum subroutine (defined below) that returns true iff there is a path between $b^n$ and $x$. We run Grover's search over all vertices $x$ in the middle level of the hypercube $A_{k+1}$, searching for an $x$ 
for which both $\textbf{Path}_0(x)$ and $\textbf{Path}_1(x)$ return true.
\end{enumerate}

Now we describe how to implement $\textbf{Path}_0(x)$ ($\textbf{Path}_1(x)$ is implemented similarly). 
Let $\textbf{Reachable}_i(x)$ be a quantum procedure that computes whether $x \in A_i$ is reachable from $0^n$:

\begin{enumerate}
\item If $i = 1$, then $\textbf{Reachable}_i(x) = r(x)$.
\item If $i > 1$, we use Grover's search to search for a vertex $y \in A_{i-1}$, $y \preceq x$ with directed paths $0^n \rightarrow y$ and $y\rightarrow x$:
\begin{itemize}
\item Call $\textbf{Reachable}_{i-1}(y)$ to check if there is a path from $0^n$ to $y$; 
\item
Call \textbf{HypercubePath} recursively on the subgraph of a smaller subcube $\{z \in \{0, 1\}^n \mid y \preceq z \preceq x\}$
to check if there is a path from $y$ to $x$.
\end{itemize}
\end{enumerate}
Then $\textbf{Path}_0(x) = \textbf{Reachable}_{k+1}(x)$.
\caption{Quantum algorithm for \textsc{Hypercube Path}}
\end{algorithm}

\subsection{Running Time} \label{sec:hctime}

The recursive depth of the presented algorithm is $d=O(\log n)$, since the dimension of the hypercube decreases $1/(\alpha_i-\alpha_{i-1})$ times with each recursive call.

We express the running time as $O(n^{f(n)} \gamma^n)$ for some constant $\gamma$ and $f(n) = o(n/\log n)$ and determine the dominant factor of the running time, $\gamma^n$ (which we call the {\em exponential part} of the running time).
This results in omitting a multiplicative factor of $O(n^c)$
at each level of recursion. If the complexity without these factors is $O(\gamma^n)$, the actual complexity
is $O(\gamma^n n^{O(\log n)})=O^* ((\gamma+\epsilon)^n)$ for any $\epsilon > 0$.

We now find the value of $\gamma$.
\begin{itemize}
\item Let $T_i$ be the exponential part of the running time of $\textbf{Reachable}_i(x)$.
It is given by the recurrence with $T_1 = 1$ and
\begin{equation} \label{eq:reachable}
T_{i+1} = \sqrt{{\alpha_{i+1}n \choose \alpha_in}}\left(\gamma^{(\alpha_{i+1}-\alpha_i)n} + T_i\right).
\end{equation}
\item The classical preprocessing step (computing $r(x)$ for all $x$ with Hamming weight at most $\alpha_1 n$), requires time ${n \choose \leq \alpha_1 n}$.
\item The exponential part of the running time of the whole algorithm is 
$${n \choose \leq \alpha_1 n}+ \sqrt{{n \choose n/2}} T_{k+1}.$$
\end{itemize}

To find $\gamma$, we need to balance the summands.
For this, we require the following constraints:
\begin{itemize}
\item $\gamma^n = {n \choose \leq \alpha_1 n}$;
\item $\gamma^n = \sqrt{{n \choose n/2}} T_{k+1}$;
\item $\gamma^{(\alpha_{i+1}-\alpha_i)n} = T_i$ for all $i \in [2;k]$.
\end{itemize}

By using Eq.~(\ref{eq:reachable}), we obtain the following system of equations:
\begin{equation} \label{eq:req}
\begin{cases}
\gamma^n = {n \choose \leq \alpha_1 n} \\
\gamma^n = \sqrt{{n \choose n/2}{n/2 \choose \alpha_k n}} \gamma^{(1/2-\alpha_k)n} \\
\gamma^{(\alpha_{i+1}-\alpha_i)n} = \sqrt{{\alpha_in \choose \alpha_{i-1}n}} \gamma^{(\alpha_i-\alpha_{i-1})n} \hspace{1cm} \text{for all $i \in [2;k]$}.
\end{cases}
\end{equation}
By letting $\gamma = 2^c$ and applying Theorem \ref{entropy}, this reduces to:
$$
\begin{cases}
c = \be(\alpha_1) \\
2c(2\alpha_k + 1) = 2+\be(2\alpha_k) \\
2c(\alpha_{i+1}-2\alpha_i+\alpha_{i-1}) = \alpha_i \be\left(\frac{\alpha_{i-1}}{\alpha_i}\right)  \hspace{1cm} \text{for all $i \in [2;k]$}.
\end{cases}
$$
By solving these equations numerically for fixed $k \geq 2$, we get increasingly better results for $\gamma$.
For $k = 6$, we obtain $\gamma \approx 2^{0.861483} \approx 1.816905$ and further increments of $k$ lead to negligible improvements.
The solution for $\alpha = (\alpha_1, \ldots, \alpha_6)$ then is $$\alpha \approx (0.28448, 0.28453, 0.28470, 0.28628, 0.29604, 0.34174).$$

\subsection{Vertex Ordering Problems}
\label{sec:order}

The presented algorithm can be used for many graph vertex ordering problems.
Consider any problem in which, for a given graph $G=(V,E)$, we have to compute a value in one of the following forms:
\begin{equation}
\label{eq:ord}
\min_{\pi \in S_n} \max_{v \in V} f(G, \pi_{<v},v) \hspace{1cm} \text{or} \hspace{1cm} \min_{\pi \in S_n} \sum_{v \in V} f(G, \pi_{<v},v),
\end{equation}
where $S_n$ is the set of all $n$-permutations, $\pi_{<v} = \{u \in V \mid \pi_u < \pi_v\}$, and $f(G, S, v)$ is computable in $\poly(n)$ time, given the graph $G$, a subset $S\subset V$ and a vertex $v\notin S$.

\cite{Bodlaender2012} prove that such problems can be computed by dynamic programming on sets in $O^*(2^n)$ time using the Bellman-Held-Karp algorithm.
The corresponding recursive formulas for a set $S \subseteq V$ are given by
$$A_G(S) = \min_{v \in S} \max\{f(G,S\setminus \{v\},v), A_G(S \setminus \{v\})\}$$
and
$$A_G(S) = \min_{v \in S} (f(G,S\setminus \{v\},v) + A_G(S \setminus \{v\})).$$
By replacing Grover's search with quantum minimum finding in our \textbf{HypercubePath} algorithm, we obtain a quantum speedup for these problems.

\begin{corollary}
For any problem that can be expressed in any of the forms given in Eq.~(\ref{eq:ord}), there is a bounded-error quantum algorithm that solves it in time $O^*(1.817^n)$.
\end{corollary}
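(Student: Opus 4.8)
The plan is to reduce any vertex ordering problem of the form in Eq.~(\ref{eq:ord}) to the \textsc{Hypercube Path} problem and then invoke a quantum-minimum-finding version of the \textbf{HypercubePath} algorithm. First I would identify the subsets $S \subseteq V$ with the vertices of the Boolean hypercube $Q_n$ in the natural way (the characteristic vector of $S$), so that $\pi_{<v}$ running over all prefixes of a permutation $\pi$ corresponds exactly to walking up a monotone path $\varnothing = 0^n \to \cdots \to 1^n$ in $Q_n$, inserting one element $v$ at each step. Along such a path, the quantity being minimized in the ``$\max$'' variant is the largest $f(G, S \setminus \{v\}, v)$ over the edges $S\setminus\{v\} \to S$ traversed, and in the ``$\sum$'' variant it is the sum of these values over the path. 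Thus the optimal value $A_G(V)$ equals the minimum, over all monotone $0^n$--$1^n$ paths, of the bottleneck (resp.\ total) edge cost, where the cost of an edge $x \to y$ with $y = x$ plus the $i$-th element is $f(G, x, i)$, computable in $\poly(n)$ time by hypothesis, hence playable by $\mathcal P_{\text{edge}}$.

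Next I would observe that the \textbf{HypercubePath} algorithm never uses the structure of $G$ beyond the edge-query oracle $\mathcal P_{\text{edge}}$ and the fact that it combines sub-answers by a Boolean OR (Grover search for \emph{some} witness). To handle Eq.~(\ref{eq:ord}) one replaces every Grover search over potential intermediate vertices by quantum minimum finding (Theorem, Quantum Minimum Finding): at the leaves $\textbf{Reachable}_1$ returns not a bit but the best cost of a monotone path from $0^n$ to $x$ within the bottom slab, computed in the classical preprocessing by the same dynamic-programming recursion with $\bigvee$ replaced by $\min$ and $\land$ by $\max$ (or $+$); at each higher level $\textbf{Reachable}_{i+1}(x)$ searches for the $y \in A_i$, $y \preceq x$, minimizing $\max\{\textbf{Reachable}_i(y),\ \textsc{HypercubePath-cost}(y,x)\}$ (resp.\ the sum), where the recursive call now also returns a cost rather than a bit. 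The middle-level step similarly uses minimum finding over $x \in A_{k+1}$ to combine $\textbf{Path}_0(x)$ and $\textbf{Path}_1(x)$. Since both halves of the path are edge-disjoint monotone segments that together cover every vertex exactly once, their combined cost is exactly the objective of a valid ordering, and minimizing over $x$ and over the internal choices yields $A_G(V)$.

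The running-time and error analysis carries over essentially verbatim. Each Grover search is replaced by a minimum-finding call over the same search space, which by the remark following the Quantum Minimum Finding theorem costs only an extra $O(\log n)$ factor per level to drive the per-call error below $O(1/\sqrt{N})$; since there are $O(\log n)$ recursion levels (as already noted, the cube dimension shrinks by a constant factor each time), the cumulative polylogarithmic overhead is absorbed into the $O^*$ notation, and the exponential part $\gamma^n$ with $\gamma \approx 1.817$ is unchanged because the recurrences (\ref{eq:reachable}) and the balancing system (\ref{eq:req}) depend only on the sizes of the search spaces, not on whether we search for an OR-witness or a minimizer. Bounded error is maintained by the standard union bound over the $O(\log n)$ levels after error reduction. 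The main thing to get right — and the only real obstacle — is the bookkeeping that the two monotone segments meeting at the middle level, together with the recursively decomposed sub-segments, induce a genuine permutation of $V$ (each element inserted exactly once) so that minimizing the combined cost is exactly Eq.~(\ref{eq:ord}); once that correspondence is spelled out, everything else is a mechanical translation of Boolean operations into $\min$/$\max$/$+$.
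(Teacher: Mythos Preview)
Your proposal is correct and follows essentially the same approach as the paper: the paper simply remarks that by replacing Grover's search with quantum minimum finding in the \textbf{HypercubePath} algorithm one obtains the stated speedup, and your write-up spells out exactly this substitution (together with the natural identification of permutations with monotone hypercube paths and the replacement of $\bigvee/\land$ by $\min/\max$ or $\min/+$). The additional bookkeeping you mention about error amplification and the permutation correspondence is straightforward and is implicitly assumed in the paper as well.
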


Examples include \textsc{Treewidth}, \textsc{Minimum Fill-In}, \textsc{Pathwidth}, \textsc{Sum Cut}, \textsc{Minimum Interval Graph Completion}, \textsc{Cutwidth} and \textsc{Optimal Linear Arrangement} (see \cite{Bodlaender2012} for the descriptions of these problems).
Note that for \textsc{Treewidth} and \textsc{Minimum Fill-In} there are more efficient classical algorithms that run in $O^*(1.735^n)$ time, see \cite{FTV15}.

Similar ideas also give an $O^*(1.817^n)$ quantum algorithm  for \textsc{Travelling Salesman} but we present a better quantum algorithm for this problem in Section \ref{sec:tsp}, with running time of $O^*(1.728^n)$.

\subsection{Path Finding with Forbidden Vertices} \label{sec:forbidden}

In some problems, the entries of the dynamic programming table are indexed by subsets of $[n]$ but we only need to compute entries corresponding to some subsets. An example is the algorithm for \textsc{Bandwidth} in Section \ref{sec:band}. 

That corresponds to a setting where some vertices in the given hypercube are invalid, i.e., the path is not allowed to go through such vertices. We assume that the validity of a vertex $x$ 
can be checked by the procedure $\mathcal P_{\text{valid}}$, which we can usually assume to run in $\poly(n)$ time.
Let $\mu^n$ be  the total number of valid vertices, with $\mu$ not known to the algorithm in the beginning.

Let $\mathcal A_x$ be a procedure that for a vertex $x \in A_{k+1}$ runs $\textbf{Path}_0(x)$ and $\textbf{Path}_1(x)$ only after checking whether $x$ is valid using $\mathcal P_{\text{valid}}$.
Instead of Grover's search over the middle layer of the hypercube in \textbf{HypercubePath} algorithm, we now run variable time search over $\mathcal A_x$.
A single call to any $\mathcal A_x$ takes either $\poly(n)$ time (if $x$ is invalid) or time $T_{k+1}$.
We note that recursive calls to \textbf{HypercubePath} of dimension $n'$ still require time $\gamma^{n'}=1.817^{n'}$, as $\mu^n$ is an upper bound for the valid vertices in the whole hypercube, and if $2^{n'} < \mu^n$, there is no guaranteed improvement in the recursive call.

Then the exponential part of the running time of the quantum part (excluding the classical preprocessing) is at most
\begin{equation}
\label{eq:runtime} \sqrt{2^n + \mu^n \cdot T_{k+1}^2} = O(\sqrt{\mu^n} T_{k+1}).
\end{equation}

Let $\gamma_\mu^n$ be the exponential part of the complexity of the whole algorithm (e.g., $\gamma_2 \approx 1.817$).
We can find the optimal $\gamma_\mu$ by solving the following system of equations:
$$
\begin{cases}
\gamma_\mu^n = {n \choose \leq \alpha_1 n} \\
\gamma_\mu^n = \sqrt{\mu^n {n/2 \choose \alpha_k n}} \gamma_2^{(1/2-\alpha_k)n} \\
\gamma_2^{(\alpha_{i+1}-\alpha_i)n} = \sqrt{{\alpha_in \choose \alpha_{i-1}n}} \gamma_2^{(\alpha_i-\alpha_{i-1})n} \hspace{.8cm} \text{for all $i \in [2;k]$}.
\end{cases}
$$
For example, if $\mu = 1.8$, we get $\gamma_{1.8} \approx 1.7568$.

Note that the classical depth-first search algorithm that traverses all valid vertices, starting from $0^n$, solves the problem in time $O^*(\mu^n)$.
Figure \ref{fig:graph} shows the performance of the quantum algorithm compared to the classical.
They have the same complexity at the point $\mu_0 \approx 1.734622$.

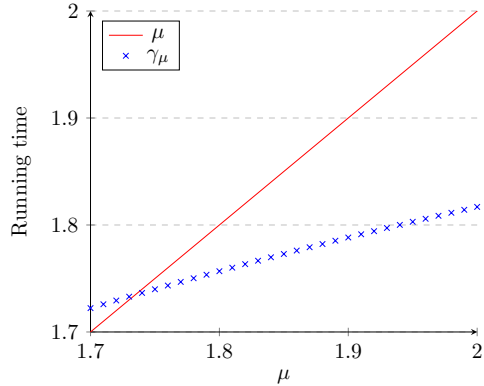
\begin{figure}[H]
\begin{center}
\begin{tikzpicture}[scale=0.75]
\begin{axis}[
    axis lines = left,
    xlabel={$\mu$},
    ylabel={Running time},
    xmin=1.7, xmax=2,
    ymin=1.7, ymax=2,
    xtick={1.7,1.8,1.9,2},
    ytick={1.7,1.8,1.9,2},
    ymajorgrids=true,
    grid style=dashed,
    legend pos=north west
]
\addplot[color=red]{x};
    \addlegendentry{$\mu$}

\addplot[
    only marks,
    color=blue,
    mark=x,
    ]
    coordinates {
    (1.7,1.7223)
    (1.71,1.7259)
    (1.72,1.7294)
    (1.73,1.7330)
    (1.74,1.7365)
    (1.75,1.7399)
    (1.76,1.7434)
    (1.77,1.7468)
    (1.78,1.7502)
    (1.79,1.7535)
    (1.8,1.7568)
    (1.81,1.7601)
    (1.82,1.7634)
    (1.83,1.7666)
    (1.84,1.7698)
    (1.85,1.7729)
    (1.86,1.7761)
    (1.87,1.7792)
    (1.88,1.7822)
    (1.89,1.7853)
    (1.9,1.7883)
    (1.91,1.7913)
    (1.92,1.7942)
    (1.93,1.7972)
    (1.94,1.8001)
    (1.95,1.8030)
    (1.96,1.8058)
    (1.97,1.8086)
    (1.98,1.8114)
    (1.99,1.8142)
    (2,1.8169)
    };
    \addlegendentry{$\gamma_\mu$}
\end{axis}
\end{tikzpicture}
\end{center}
\caption{Running time of classical and quantum algorithms for \textsc{Hypercube Path} with $\mu^n$ valid vertices.} \label{fig:graph}
\end{figure}

\section{Graph Bandwidth}
\label{sec:band}

In this section, we show how it is possible to speed up Cygan's and Pilipczuk's $O^*(5^n)$ algorithm from \cite{CP08} using the \textbf{HypercubePath} algorithm with forbidden vertices.

\begin{theorem}
There is a bounded-error quantum algorithm that solves \textsc{Bandwidth} in time \linebreak $O^*(2.945^n)$.
\end{theorem}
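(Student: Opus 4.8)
The plan is to reduce \textsc{Bandwidth} to \textsc{Hypercube Path} with forbidden vertices (Section~\ref{sec:forbidden}) by recasting Cygan and Pilipczuk's dynamic programming algorithm \cite{CP08}. First I would pass from the optimisation problem to the decision version by binary search over the target value $b\in\{1,\dots,n-1\}$, at an $O(\log n)$ overhead, so that it suffices to decide, for a fixed $b$, whether $G$ admits a linear layout in which every edge has length at most $b$. Recall that the algorithm of \cite{CP08} builds such a layout incrementally from left to right (position by position, equivalently block by block), its state recording the set of already-placed vertices together with a bounded amount of information about the \emph{active window}: the placed vertices lying within the last $b$ positions, which are exactly the placed vertices that can still acquire a neighbour without breaking the bound. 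A transition appends one vertex to the layout and retires the vertices dropping out of the window, and is legal only when it creates no bandwidth violation, i.e.\ no retired vertex has an unplaced neighbour and the vertices sharing a window are mutually placeable within $b$ positions.

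Next I would encode such a state as a vertex of a hypercube $Q_N$ with $N=O(n)$: each vertex of $G$ gets one bit marking whether it is placed, plus a constant number of further bits describing its status relative to the active window, arranged so that the admissible per-vertex states form a monotone chain. Then every DP transition becomes a short run of $0\to 1$ bit flips --- one to place a new vertex, the rest to retire window vertices; the empty layout maps to $0^N$; the terminal state, with everything placed and retired, maps to $1^N$ and is reachable from any fully-placed state; and the infeasible DP states are precisely the forbidden vertices. Hence $G$ has bandwidth at most $b$ iff $0^N$ is connected to $1^N$ in this instance, and I would solve it with the algorithm of Section~\ref{sec:forbidden}, which does not need to know the number of valid vertices in advance.

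By Eq.~(\ref{eq:runtime}) the running time is controlled by the number of valid (non-forbidden) vertices, i.e.\ the number of \emph{reachable} DP states, which I would write as $\mu^N$. The crucial point is the combinatorial estimate on $\mu$: the crude per-vertex bound only yields a $\mu$ near the break-even value $\mu_0\approx 1.735$, which would not improve on the best classical $O^*(4.383^n)$ \cite{CP10}, so one has to show that the bandwidth-$b$ constraints cut the number of reachable configurations (placed set together with active window) down to $O^*(c^n)$ for a constant $c$ safely below the unconstrained count, and uniformly over $b$. Substituting the resulting $\mu$ and the dimension $N$ into the system of equations of Section~\ref{sec:forbidden} and solving it numerically should then give $\gamma_\mu^{N}\approx 2.945^n$, with the $O(\log n)$ from the binary search absorbed into $O^*(\cdot)$.

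The main obstacle I anticipate is exactly this state count: bounding, over all target bandwidths, how many configurations of placed vertices together with the active window can actually occur on a prefix of a valid layout. The other ingredients should be routine bookkeeping --- choosing the window encoding so that the hypercube is a genuine monotone single-flip graph, making sure window vertices are retired in the correct order so the DP stays sound, and checking that a combined step (place one vertex, retire several) is simulated by a chain of individually valid intermediate vertices.
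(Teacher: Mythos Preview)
Your plan diverges from the paper's proof in a way that leaves a genuine gap. The paper does \emph{not} fold the whole Cygan--Pilipczuk DP into a single \textsc{Hypercube Path} instance. It keeps the bucket assignment $\sigma$ as an \emph{outer} parameter: for each fixed $\sigma$ it runs \textbf{HypercubePath} on the natural $n$-dimensional cube over subsets $S\subseteq V$, and the outer loop over the (roughly $3^n$) assignments is handled by variable time search. The crux is Proposition~\ref{thm:gammamu}, $\gamma_\mu^2/\mu \le \mu_0$, which together with the classical bound $\sum_\sigma \mu(\sigma)^n = A = O^*(5^n)$ on valid pairs $(\sigma,S)$ gives
$\sqrt{\sum_\sigma \min\{\mu(\sigma)^{2n},\gamma_{\mu(\sigma)}^{2n}\}} \le \sqrt{A\,\mu_0^n} \approx 2.945^n$.
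This amortisation across the assignments $\sigma$ is exactly where the constant $2.945=\sqrt{5\mu_0}$ comes from, and nothing analogous is available in a single-cube encoding.

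Concretely, your encoding inflates the cube to dimension $N = cn$ with $c\ge 2$ (one ``placed'' bit plus at least one window-status bit per vertex). Since $\gamma_\mu \ge \gamma_{\mu_0}=\mu_0$ for every $\mu\ge \mu_0$, the quantum branch of Section~\ref{sec:forbidden} always costs at least $\mu_0^{N}=\mu_0^{cn}$, and this already exceeds $2.945^n$ once $c>\log_{\mu_0}2.945\approx 1.96$; with $c\ge 2$ the quantum branch can therefore never meet the target, regardless of how tight your state-count bound is. The only way your scheme would reach $2.945^n$ is if the number of valid states is itself $\le 2.945^n$, in which case classical DFS already attains the bound --- a purely classical improvement over $O^*(4.383^n)$ that you have not established and that is not known. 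Separately, your sliding-window state (placed set together with window set) is not obviously sufficient for the DP: retiring must remove the \emph{oldest} window vertex, so either the order inside the window must be recorded (blowing up the state well beyond a constant number of bits per vertex) or, as in \cite{CP08} and in the paper, a bucket assignment is fixed in advance --- which is precisely the two-level structure you were trying to collapse.
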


First we quickly recap the classical algorithm itself.

\subsection{Classical Algorithm}

We assume that the given graph is connected, as each connected component can be processed independently.
The algorithm performs binary search on the answer to find the optimal bandwidth.
For a fixed $b \in [n-1]$, the main procedure checks whether the bandwidth is at most $b$.

For simplicity, assume that $(b+1) \mid n$.
The vertices are placed into $k = n/(b+1)$ buckets of $b+1$ vertices in each.
The $k$ buckets are numbered 1 to $k$ and placed in order. The $i$-th bucket contains vertex positions $[(i-1)(b+1)+1, i(b+1)]$. For the bandwidth of any ordering to not exceed $b$, every pair of connected vertices must be in the same or adjacent buckets.
The algorithm considers all possible \emph{assigments} of vertices to buckets.

The assignments are constructed as follows.
First pick any spanning tree $T$ of $G$ and root this tree at vertex $r$.
For a particular assignment $\sigma : [n] \to [k]$ assign $r$ to any bucket $\sigma(r)$.
Then, for each vertex $v \neq r$, choose a number $s_{\sigma}(v) \in \{-1, 0, +1\}$.
This denotes the offset between the bucket of $v$ and its parent $p(v)$: $\sigma(v) = \sigma(p(v)) + s_{\sigma}(v)$.
Some of these assignments do not place the correct number of vertices in the buckets or assign vertices to buckets outside $[k]$, but every correct assignment can be represented this way.

For a fixed correct bucket assignment $\sigma$, the algorithm first checks whether there are any edges between two vertices $u, v$ such that $|\sigma(u) - \sigma(v)| \geq 2$. If so, the bandwidth of this assignment already exceeds $b$. If not, any edges of length greater than $b$ can only be between adjacent buckets. The algorithm then checks whether it has a feasible \emph{vertex ordering} using dynamic programming.

\paragraph{Dynamic Programming.} The algorithm examines subsets $S \subseteq [n]$ with the following property: if $S_i = \{v \in S \mid \sigma(v) = i\}$ are the vertices from $S$ assigned to the $i$-th bucket, then:
$|S_1|\geq |S_2| \geq \ldots \geq |S_k|$ and $|S_k| \geq |S_1|-1$.

The dynamic programming computes, for each such set $S$, whether there is an ordering of $S$ such that
\begin{itemize}
\item the vertices $S_i$ can be assigned to the first $|S_i|$ positions of the bucket: $[(i-1)(b+1)+1,(i-1)(b+1)+|S_i|]$;
\item the bandwidth of this partial ordering is at most $b$ (for each $v \in S$, any edge $\{u,v\}$ has length $\leq b$).
\end{itemize}

Now let $i$ be the last bucket of maximum size in $S$; then for each $v \in S_i$, we check whether $v$ can be placed at position $(i-1)(b+1)+|S_i|$ to form a partial ordering with bandwidth at most $b$.
This is possible if and only if the following conditions hold:
\begin{enumerate}
\item \label{itm:c2} there are no edges between $v$ and $S_{i-1}$; 
\item \label{itm:c3} there are no edges between $v$ and $\sigma^{-1}(i+1) \setminus S_{i+1}$,
\item \label{itm:c1}$S \setminus \{ v \}$ admits a partial ordering with bandwidth at most $b$.
\end{enumerate}
The first two conditions describe edges adjacent to $v$ that can be longer than $b$. These two conditions can be checked in $\poly(n)$ time.
The third condition can be checked using the result for $S\setminus \{ v \}$ from a previous step of the computation.

\paragraph{Running Time.}

For a fixed $\sigma$, a set $S \subseteq [n]$ admitting a partial ordering with bandwidth at most $b$ must satisfy the following conditions for every $v \neq r$:
\begin{itemize}
\item if $p(v) \in S$, then it cannot be that $s_{\sigma}(v) = +1$ and $v \notin S$; 
\item if $p(v) \notin S$, then it cannot be that $s_{\sigma}(v) = -1$ and $v \in S$.
\end{itemize}
Call a pair $(\sigma, S)$ \emph{valid} if it satisfies these conditions. We can upper bound the running time of the algorithm by bounding the number of valid pairs.

Once we fix whether or not $p(v) \in S$, there are only 5 valid choices for $(s_{\sigma}(v), [v \in S])$. Additionally there are $k$ choices for $\sigma(r)$ and 2 choices for whether $r \in S$, hence the algorithm has to examine at most $2k \cdot 5^n$ valid pairs $(\sigma, S)$ and the running time is $O^*(5^n)$.

\subsection{Quantum Algorithm}

For a fixed assignment $\sigma$, the problem of determining whether there is a compatible vertex ordering with bandwidth at most $b$ can be seen as \textsc{Hypercube Path} with forbidden vertices.
For any set $S$, the procedure $\mathcal P_{\text{valid}}$ checks whether the pair $(\sigma, S)$ is valid. The procedure $\mathcal P_{\text{edge}}$ checks whether for a subset $P$ of $S$ such that $|P|+1=|S|$, we have that the vertex $S \setminus P$ is placed in the last bucket of maximum size in $S$ and satisfies conditions \ref{itm:c2} and \ref{itm:c3}.

Let the number of valid sets for an assignment $\sigma$ be $\mu(\sigma)^n$.
The classical depth-first search over valid assignments decides whether $\sigma$ has a valid ordering in time $\mu(\sigma)^n$.
On the other hand, the quantum algorithm from Section \ref{sec:forbidden} works in exponential time $\gamma_{\mu(\sigma)}^n$.
We know that $\mu_0$ is the point where the quantum algorithm becomes faster than the classical one.
Then to solve the problem for a fixed $\sigma$, we can run the depth-first search until it terminates or visits $\mu_0^n$ valid vertices; if it has not terminated, we stop and run the quantum algorithm instead, which is then guaranteed to be faster.

The quantum algorithm for \textsc{Bandwidth} runs variable time search over all possible assignments $\sigma$, and for each $\sigma$ applies the subroutine described above.

\subsection{Running Time}

By Theorem \ref{thm:vts}, the exponential part of the running time of the algorithm is
$$\sqrt{\sum_{\sigma} \min\left\{\mu(\sigma)^{2n},\gamma_{\mu(\sigma)}^{2n}\right\}}.$$
Now we need to estimate the largest possible value of this expression. Let $A=\sum_{\sigma} \mu(\sigma)^n$ be the total number of valid pairs $(\sigma, S)$.
We claim that the maximum value of the expression is $O(\sqrt{A \cdot \mu_0^n})$.

We can reformulate this task as follows.
We have  real variables $x_1, \ldots, x_N$ 
such that $N \leq 3^n$, $0 \leq x_i \leq 2$ for each $i \in N$, and $\sum_{i=1}^N x_i^n = A$.
Let $I_c = \{i \in [N] \mid x_i \leq \mu_0\}$ and $I_q = \{i \in [N] \mid x_i > \mu_0\}$.
We need to maximize the following expression:
$$\sqrt{\sum_{i \in I_c} x_i^{2n} + \sum_{i \in I_q} \gamma_{x_i}^{2n}}.$$
For this, we upper bound each of the summands by $A \cdot \mu_0^n$.
\begin{itemize}
\item As $(x_i+x_j)^2 \geq x_i^2 + x_j^2$ for any $i, j$, the worst case for the first part is when as many $x_i$ are as large as possible, i.e., $x_i = \mu_0$.
Hence $$\sum_{i \in I_c} x_i^{2n} \leq \frac{A}{\mu_0^n} \cdot \mu_0^{2n} = A\cdot \mu_0^n.$$
\item For the second part, we need the following proposition:

\begin{proposition} \label{thm:gammamu}
For all $\mu \in [\mu_0, 2]$, we have $\gamma_{\mu}^2/\mu \leq \mu_0$.
\end{proposition}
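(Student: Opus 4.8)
The plan is to avoid tracking how the optimal tuple $(\alpha_1,\dots,\alpha_k)$ moves as $\mu$ changes, and instead to exploit that $\gamma_\mu$ is by construction the \emph{best} exponent over all admissible tuples: the cost of the algorithm on any single fixed admissible tuple already upper bounds $\gamma_\mu$. The tuple I would plug in is the one optimal at $\mu=\mu_0$.

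First I would record what that tuple yields. Let $\alpha^\ast=(\alpha_1^\ast,\dots,\alpha_k^\ast)$ solve the system defining $\gamma_{\mu_0}$; by definition of $\mu_0$ we have $\gamma_{\mu_0}=\mu_0$. The first equation of that system says the classical preprocessing costs ${n \choose \leq \alpha_1^\ast n}=\gamma_{\mu_0}^n=\mu_0^n$. The second equation reads $\gamma_{\mu_0}^n=\sqrt{\mu_0^n\,{n/2 \choose \alpha_k^\ast n}}\,\gamma_2^{(1/2-\alpha_k^\ast)n}=\sqrt{\mu_0^n}\cdot T_{k+1}$, the last equality because $\sqrt{{n/2 \choose \alpha_k^\ast n}}\,\gamma_2^{(1/2-\alpha_k^\ast)n}$ is the exponential part of $T_{k+1}$ once the remaining (third-line) equations of the system hold (cf.\ Section~\ref{sec:hctime}); hence $T_{k+1}=\mu_0^{n/2}$ for this tuple. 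The structural point is that ${n \choose \leq \alpha_1^\ast n}$ and $T_{k+1}$ — the latter built from $\alpha^\ast$ and the fixed recursion base $\gamma_2\approx1.817$ through~(\ref{eq:reachable}) — depend only on $\alpha^\ast$ and $\gamma_2$, not on $\mu$.

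Next, for an arbitrary $\mu\in[\mu_0,2]$, I would run the forbidden-vertex algorithm with this same tuple $\alpha^\ast$. The preprocessing still costs $\mu_0^n$ and $T_{k+1}$ is still $\mu_0^{n/2}$; only the top-level variable-time search over the middle layer changes, so by~(\ref{eq:runtime}) its cost becomes $\sqrt{2^n+\mu^n\mu_0^n}$. Since $\mu\mu_0\geq\mu_0^2>2$, this equals $(\mu\mu_0)^{n/2}$ up to subexponential factors, and since $\mu\geq\mu_0$ the preprocessing term $\mu_0^n$ is at most $(\mu\mu_0)^{n/2}$; so the total exponential cost with this (generally suboptimal) tuple is $\bigl(\sqrt{\mu\mu_0}\bigr)^n$. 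As $\gamma_\mu$ is the exponent of the \emph{optimal} tuple, $\gamma_\mu\leq\sqrt{\mu\mu_0}$, i.e.\ $\gamma_\mu^2/\mu\leq\mu_0$, which is the claim.

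The only care required is in the simplification of $\sqrt{2^n+\mu^n\mu_0^n}$, which uses $\mu_0>\sqrt2$ (so $\mu_0^2>2$, hence $\mu\mu_0>2$, so the $\mu^n\mu_0^n$ term dominates $2^n$ and~(\ref{eq:runtime}) applies) together with $\mu\geq\mu_0$ (so $\mu_0^n\leq(\mu\mu_0)^{n/2}$) — both immediate from $\mu_0\approx1.7346$ — and in checking that the substituted $\alpha^\ast$ still makes the algorithm well-defined, which is routine since all manipulations are at the level of exponential parts exactly as in Section~\ref{sec:hctime}. I do not anticipate a genuine obstacle; the entire content of the proof is to resist differentiating the implicit equations for $\alpha_i(\mu)$ and instead collapse the estimate to one well-chosen evaluation point.
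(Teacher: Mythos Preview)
Your argument is correct and is essentially the paper's own proof: fix the level assignment $\alpha^\ast$ that is optimal at $\mu_0$ (so $\sqrt{\mu_0^n}\,T_{k+1}=\gamma_{\mu_0}^n=\mu_0^n$), observe that for general $\mu$ only the top-level $\sqrt{\mu^n}$ factor changes while $T_{k+1}$ and the preprocessing stay the same, and conclude $\gamma_\mu\le\sqrt{\mu\,\mu_0}$. Your write-up is in fact a bit more careful than the paper's in explicitly checking that the preprocessing term and the additive $2^n$ in~(\ref{eq:runtime}) are dominated, but the idea is identical.
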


\begin{proof}
Examine any level assignment $\alpha = (\alpha_1, \ldots, \alpha_k)$ for the \textbf{HypercubePath} quantum algorithm with $\mu_0^n$ valid vertices.
Let the running time of this algorithm be $\sqrt{\mu_0^n} T_{k+1}$, where $T_{k+1}$ is the complexity of the \textbf{Path} procedure with $k$ levels defined by $\alpha$, see Eq.~(\ref{eq:runtime}).
Let $\mu = q\mu_0$ and examine the \textbf{HypercubePath} algorithm with $\mu^n$ valid vertices.
Since we can use the assignment $\alpha$ for this algorithm as well, its running time is at most $\sqrt{\mu^n} T_{k+1} = \sqrt{q^n} \sqrt{\mu_0^n} T_{k+1}$.
Hence, $\gamma_{\mu} \leq \sqrt q\mu_0$.
Therefore, $$\gamma_{\mu}^2/\mu \leq (\sqrt q \mu_0)^2/(q\mu_0) = \mu_0. \qedhere$$
\end{proof}

Let $w(\mu)\cdot \frac{A}{\mu^n}$ be the number of $i \in [N]$ such that $x_i = \mu$, so that $\sum_{\mu} w(\mu) = 1$.
Then we have that
$$\sum_{i \in I_q} \gamma_{x_i}^{2n} = \sum_{\mu > \mu_0} w(\mu) \cdot \frac{A}{\mu^n} \cdot \gamma_{\mu}^{2n} \leq A \cdot \mu_0^n$$
by Proposition \ref{thm:gammamu}.
\end{itemize}

From the classical algorithm we have that $A = O^*(5^n)$.
Thus the algorithm runs in exponential time $\sqrt{5^n \cdot \mu_0^n} \approx 2.945^n$.
Once again, because of the additional polynomial factors, the final complexity of the quantum algorithm for \textsc{Bandwidth} is $O^*(2.945^n)$.

\section{Travelling Salesman Problem} \label{sec:tsp}

In this section we describe the quantum algorithm for \textsc{Travelling Salesman}.
It uses ideas from the Bellman-Held-Karp algorithm \cite{HK62} which uses dynamic programming and runs in time $O(n^22^n)$.

Let the given graph be $G = (V,E,w)$, where $|V| = n$, $E \subseteq V^2$ and $w : E \to \mathbb N$ are the edge weights (let $w(u,v) = \infty$ if $\{u,v\} \notin E$).
In this setting we assume that access to $w$ takes $\poly(n)$ time.
Let $L = \max_{\{u,v\} \in E} w(u,v)$.
Then addition of two integer numbers requires $O(\log L)$ time.

\begin{theorem}
There is a bounded-error quantum algorithm that solves \textsc{Travelling Salesman} in time $O^*(1.728^n \log L)$.
\end{theorem}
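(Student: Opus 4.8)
The plan is to turn the warm-up Hamiltonian-cycle algorithm from the introduction into a \textsc{Travelling Salesman} algorithm by replacing Grover's search with quantum minimum finding, and then to optimize its parameters, following the same template as the \textbf{HypercubePath} algorithm of Section~\ref{sec:hctime}. The extra leverage over \textbf{HypercubePath} is that for TSP a small sub-instance can be answered from a precomputed table rather than recursed on.

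First I would fix the dynamic-programming structure. For $S\subseteq V$ and $u,v\in S$, let $g(S,u,v)$ be the minimum weight of a Hamiltonian path of $G[S]$ from $u$ to $v$ (and $\infty$ if there is none). Fixing an arbitrary $r\in V$, the TSP optimum equals $\min\{\,g(S,r,q)+g((V\setminus S)\cup\{r,q\},q,r)\,\}$ over $q\in S\setminus\{r\}$ and $S\ni r$ of the prescribed size, since every Hamiltonian cycle is obtained by cutting it at $r$ and at its $|S|$-th vertex. The recursion is driven by the identity — the analogue of the edge relation of \textsc{Hypercube Path}, and a consequence of the fact that a subpath of an optimal path is optimal —
$$g(S,u,v)=\min_{t\in T,\ u\in T\subseteq S\setminus\{v\}}\bigl(g(T,u,t)+g((S\setminus T)\cup\{t\},t,v)\bigr),$$
where $|T|$ is a size we are free to choose; the two pieces are Hamiltonian-path instances on vertex sets meeting only in the endpoint $t$, so no edge is double-counted.

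Next the algorithm. Fix an integer $k$ and constants $0<\beta_1<\dots<\beta_{k+1}\leq\tfrac12$, to be chosen later. First run Bellman--Held--Karp classically to tabulate $g(S,u,v)$ for all $S$ with $|S|\leq\beta_1 n$; this costs $O^*\bigl({n\choose\leq\beta_1 n}\log L\bigr)$. For $|S|=\beta_i n$, define $\textbf{Dist}_i(S,u,v)$ returning $g(S,u,v)$: for $i=1$ read it off the table; for $i>1$ run quantum minimum finding over all $t\in S$ and all $T\subseteq S\setminus\{v\}$ with $u,t\in T$ and $|T|=\beta_{i-1}n$, evaluating $\textbf{Dist}_{i-1}(T,u,t)$ plus the value for $(S\setminus T)\cup\{t\}$, which has $(\beta_i-\beta_{i-1})n+1$ vertices and is a table lookup if $\beta_i-\beta_{i-1}\leq\beta_1$ and otherwise a recursive call of the whole algorithm on that smaller graph. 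The top level runs quantum minimum finding over $(S,q)$ as in the cycle formula, combining $\textbf{Dist}_{k+1}(S,r,q)$ with the value of the $\approx(1-\beta_{k+1})n$-vertex complement (again a recursive call). As in the remark following Quantum Minimum Finding, each query is repeated $O(\log n)$ times for error reduction; since each recursive call is on a constant-fraction-smaller vertex set, the recursion depth is $O(\log n)$, so the accumulated multiplicative overhead is $n^{O(\log n)}$, absorbed into the stated bound exactly as in Section~\ref{sec:hctime}.

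Finally, the running time. Writing $T_i$ for the exponential part of $\textbf{Dist}_i$ and letting $\gamma$ be the target base — so that, by strong induction on the number of vertices, the whole algorithm on $m$ vertices runs in time $\gamma^m$ up to the $O^*$ slack — one gets $T_1=1$, $T_{i+1}=\sqrt{{\beta_{i+1}n\choose\beta_i n}}\bigl(T_i+\gamma^{(\beta_{i+1}-\beta_i)n}\bigr)$ (the last term being only $\poly(n)$ when the gap is looked up), and total exponential cost ${n\choose\leq\beta_1 n}+\sqrt{{n\choose\beta_{k+1}n}}\bigl(T_{k+1}+\gamma^{(1-\beta_{k+1})n}\bigr)$. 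Balancing the summands produces a system of the same shape as Eq.~(\ref{eq:req}); substituting $\gamma=2^c$ and applying Theorem~\ref{entropy} reduces it to a small system to be solved numerically while optimizing over $k$ and the split sizes $\beta_i$, which I expect to yield $\gamma\approx2^{0.789}\approx1.728$ — below both the $O^*(1.817^n)$ one gets by directly mimicking \textbf{HypercubePath} (where every gap must be recursed on instead of read off) and the $O^*(1.755^n)$ warm-up. The only super-polynomial dependence on the weights is the $O(\log L)$ cost of each integer addition in the recurrence and the $O(\log(nL))$ cost of comparisons inside minimum finding, giving the claimed $O^*(1.728^n\log L)$. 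I expect the actual difficulty to be this parameter optimization — how finely to stratify the levels $\beta_i$, how large to make each decomposition size, and how large to make the top cut $\beta_{k+1}$, so that precomputation, the cascade of searches, and the complement subproblem all balance at $1.728$; the error-propagation and polynomial-factor accounting, while needing care, are routine given the $O(\log n)$ depth.
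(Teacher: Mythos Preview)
Your plan has the right ingredients but a structural gap that prevents it from reaching $1.728^n$. In your $\textbf{Dist}_i$ recursion, when you split a $\beta_i n$-vertex set into a $\beta_{i-1}n$-piece and a $(\beta_i-\beta_{i-1})n$-piece, you handle the first piece by $\textbf{Dist}_{i-1}$ and the second by the dichotomy ``table lookup if the gap is at most $\beta_1$, else full recursive call on that smaller graph.'' Neither branch is what the paper does, and neither is good enough. If every gap is a table lookup you are forced to have $\beta_i-\beta_{i-1}\le\beta_1$ for all $i$, and then $T_{k+1}=\sqrt{\prod_i\binom{\beta_i n}{\beta_{i-1}n}}$ together with the preprocessing bound optimizes at exactly the $O^*(1.755^n)$ warm-up (take $k=1$, $\beta_1=1/4$, $\beta_2=1/2$); more levels make the product larger, not smaller. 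If instead some gap is handled by a full recursive call, your recurrence literally has the shape of Eq.~(\ref{eq:req}) and you are back to the \textbf{HypercubePath} regime --- e.g.\ at the $n/2\to n/4$ split the complement has $n/4$ vertices and a recursive call costs $\gamma^{n/4}$, so balancing $\sqrt{\binom{n}{n/2}\binom{n/2}{n/4}}\,\gamma^{n/4}=\gamma^n$ forces $\gamma=2$. The same issue bites at the top level, where you propose to recurse on the $(1-\beta_{k+1})n$-vertex complement.

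The paper's algorithm avoids both horns by treating the two pieces of each split \emph{symmetrically}: at $n/2$ it splits into two $n/4$-pieces and applies the \emph{same} level-$n/4$ procedure to both (cost $\sqrt{\binom{n/4}{\alpha n/4}}$ each, not $\gamma^{n/4}$); at $n/4$ it splits into $\alpha n/4$ and $(1-\alpha)n/4$, both of which are in the precomputed table. There is no recursion on subgraphs at all --- the depth is a constant $3$, not $O(\log n)$ --- and the quantum cost is $\sqrt{\binom{n}{n/2}\binom{n/2}{n/4}\binom{n/4}{\alpha n/4}}$, balanced against preprocessing $\binom{n}{\le(1-\alpha)n/4}$ at $\alpha\approx0.0554$. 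Your framework can be fixed by allowing the complement piece to be handled by $\textbf{Dist}_j$ for the appropriate $j$ (in particular, by $\textbf{Dist}_{i-1}$ when the split is into two equal halves, and by $\textbf{Dist}_{k+1}$ at the top level), but as written the optimization you describe will not land at $1.728$.
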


Let $D = \{(S, u, v) \mid S \subseteq V, u, v \in S\}$ and define $f : D \to \mathbb N$ as follows: $f(S, u, v)$ is the length of the shortest path in the graph induced by $S$ that starts in $u$, ends in $v$ and visits all vertices in $S$ exactly once.
Let $N(u)$ be the set of neighbours of $u$ in $G$.
We can calculate $f(S,u,v)$ by the following recurrence:
\begin{equation} \label{eq:rec1}
f(S,u,v) = \min_{t \in N(u) \cap S \atop t \neq v}\left\{ w(u, t) + f(S\setminus \{u\}, t, v) \right\}, \hspace{1cm}  f(\{v\},v,v) = 0.
\end{equation}
Note that $f(S,u,v)$ can also be calculated recursively by splitting $S$ into two sets.
Let $k \in [2,|S|-1]$ be some fixed number, then
\begin{equation} \label{eq:rec2}
f(S,u,v) = \min_{X \subset S, |X| = k \atop u \in X, v \notin X} \min_{t \in X \atop t \neq u} \{ f(X,u,t) + f((S\setminus X) \cup \{t\},t,v)\}.
\end{equation}

Let $\alpha \in (0,1/2]$ be a parameter of the algorithm to be specified later.

\begin{algorithm}[H]
\textbf{TravellingSalesman}(graph $G$, edge weights $w$): length of the shortest Hamiltonian cycle.
\begin{enumerate}
\item Calculate the values of $f(S,u,v)$ for all $|S| \leq (1-\alpha) n/4$ classically using dynamic programming with Eq.~(\ref{eq:rec1}) and store them in memory.
\item Run quantum minimum finding over all subsets $S \subset V$ such that $|S| = n/2$ to find the answer,
$$\min_{S \subset V \atop |S| = n/2} \min_{u, v \in S \atop u \neq v} \{ f(S,u,v) + f((V \setminus S) \cup \{u, v\}, v, u) \}.$$
To calculate $f(S,u,v)$ for $|S| = n/2$, run quantum minimum finding for Eq.~(\ref{eq:rec2}) with $k = n/4$.
To calculate $f(S,u,v)$ for $|S| = n/4$, run quantum minimum finding for Eq.~(\ref{eq:rec2}) with $k = \alpha n/4$.
For any $S$ such that $|S| = \alpha n/4$ or $|S| = (1-\alpha)n/4$, we know $f(S,u,v)$ from the classical preprocessing.
\end{enumerate}
\caption{Quantum algorithm for \textsc{Travelling Salesman}}
\end{algorithm}

\subsection{Running Time} \label{sec:tsptime}

The classical preprocessing requires time $$O^*\left({n \choose \leq (1-\alpha) n/4}\right) = O^*\left(2^{\be\left(\frac{1-\alpha}{4}\right)n}\right).$$
The complexity of the quantum part is $$O^*\left(\sqrt{{n \choose n/2} {n/2 \choose n/4} {n/4 \choose \alpha n/4}}\right) = O^*\left(2^{\frac 1 2 \left( 1 + \frac1 2 + \frac{\be(\alpha)}{4}\right)n}\right).$$
The overall complexity is minimized when both parts are equal.
The optimal choice for $\alpha$ then is approximately $0.055362$.
The running time of the algorithm then is $O^*(2^{0.788595n}) = O^*(1.727391...^n)$.
Arithmetic operations on integers give additional $O(\log L)$ multiplicative factor.

Splitting the vertex set $V$ into more than 2 sets results in a worse running time, and taking more than three levels of recursion does not seem to give any improvements.

\subsection{Feedback Arc Set}

Here we show another application of the presented algorithm.
It can be immediately extended to the problem \textsc{Directed Feedback Arc Set}.
In this problem we are given a directed graph $G = (V,E)$ and the task is to find the smallest subset of edges $F \subseteq E$ such that the graph $G' = (V,E\setminus F)$ is acyclic.
Let us call $F$ the minimum feedback arc set.

Let us formulate the recursive formula for this problem such that we can apply the Travelling Salesman Problem algorithm.
Consider any partition of $V$ into two sets $A$ and $B$ of size $k$ and $n-k$, respectively.
We require that in the acyclic graph $G'$ there are no edges going from $B$ to $A$, i.e., $A$ is before $B$ in the topological order of $G'$.
The minimum feedback sets for the sets $A$ and $B$ can be computed independently.
Let $f(S)$ be the size of the minimum feedback arc set of $G$ induced on $S$.
Then we can compute this value by the following recurrence:
$$f(S) = \min_{A \subseteq S \atop |A| = k} \{ f(A) + f(S \setminus A) + |\{(a,b) \in E \mid a \in A, b \in B\}|\}.$$

We can use this recurrence for the quantum algorithm in the same way the \textbf{TravellingSalesman} algorithm utilizes Eq.~(\ref{eq:rec2}).
Hence, we have the following result.

\begin{corollary}
There is a bounded-error quantum algorithm that solves \textsc{Directed Feedback Arc Set} in time $O^*(1.728^n)$.
\end{corollary}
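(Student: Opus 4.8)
The plan is to verify the recurrence for $f$ displayed above and then feed it into the same three-level recursive scheme as the \textbf{TravellingSalesman} algorithm, with quantum minimum finding at each level and a classical dynamic-programming preprocessing for the smallest sets. The adaptation is essentially mechanical, because the feedback arc set recurrence is strictly simpler than Eq.~(\ref{eq:rec2}): it has no path endpoints and no shared pivot vertex, only a clean bipartition of the vertex set.

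First I would check correctness of the displayed recurrence. Fix $S$ and $k\in[1,|S|-1]$ and write $B=S\setminus A$. For the ``$\le$'' direction, pick any $A\subseteq S$ with $|A|=k$; let $F_A$ be a minimum feedback arc set of $G[A]$ and $F_B$ one of $G[B]$. Concatenating a topological order of $G[A]-F_A$ with one of $G[B]-F_B$ leaves only internal forward arcs together with arcs directed from $A$ to $B$, so $F_A\cup F_B$ plus all crossing arcs directed from $B$ to $A$ form a feedback arc set of $G[S]$; hence $f(S)\le f(A)+f(B)+|\{(b,a)\in E:b\in B, a\in A\}|$, matching the cross term of the recurrence. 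For the ``$\ge$'' direction, take an optimal feedback arc set $F^*$ of $G[S]$, a topological order $\tau$ of $G[S]-F^*$, and let $A^*$ be the first $k$ vertices of $\tau$ and $B^*=S\setminus A^*$. Then $F^*\cap(A^*\times A^*)$ is a feedback arc set of $G[A^*]$, $F^*\cap(B^*\times B^*)$ one of $G[B^*]$, and every crossing arc from $B^*$ to $A^*$ belongs to $F^*$; these three sets are disjoint, so $|F^*|\ge f(A^*)+f(B^*)+|\{(b,a)\in E:b\in B^*, a\in A^*\}|$, which is at least the minimand at $A=A^*$. Minimizing over $A$ gives equality, and every cross term is computable in $\poly(n)$ time from $G$ and the partition.

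Next I would instantiate the algorithm as a transcription of the \textbf{TravellingSalesman} algorithm with the endpoint and pivot bookkeeping removed: classically compute $f(S)$ for all $|S|\le(1-\alpha)n/4$ by the recurrence with $k=1$ (ordinary dynamic programming over subsets), then run quantum minimum finding over all bipartitions $V=A\sqcup B$ with $|A|=n/2$ to obtain $f(V)=\min_A\big(f(A)+f(B)+(\text{cross arcs from }B\text{ to }A)\big)$; to evaluate $f$ on a set of size $n/2$, recurse with quantum minimum finding over sub-bipartitions into halves of size $n/4$; to evaluate $f$ on a set of size $n/4$, recurse with quantum minimum finding over sub-bipartitions of sizes $\alpha n/4$ and $(1-\alpha)n/4$, whose values are read from the precomputed table. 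The nesting of the minimum-finding calls, with the $O(\log n)$ repetitions that drive each subroutine's error below the inverse square root of its number of candidates, is handled exactly as in the Preliminaries, so the algorithm succeeds with probability at least $2/3$. Because each minimization over a set of size $s$ now ranges over only the ${s\choose k}$ bipartitions, there is no extra polynomial factor beyond the binomials; the cross terms are integers bounded by $n^2$, so arithmetic costs $O(\log n)$ and there is no $\log L$ factor.

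Finally, the running-time accounting copies Section~\ref{sec:tsptime} verbatim: the preprocessing costs $O^*\big({n\choose\leq(1-\alpha)n/4}\big)=O^*\big(2^{\be((1-\alpha)/4)n}\big)$ and the quantum part costs $O^*\big(\sqrt{{n\choose n/2}{n/2\choose n/4}{n/4\choose\alpha n/4}}\big)=O^*\big(2^{\frac12(1+\frac12+\be(\alpha)/4)n}\big)$; balancing the two at $\alpha\approx0.055362$ gives $O^*(1.728^n)$. I expect the only part that is not pure bookkeeping to be the recurrence verification, and inside it the structural observation that optimizing $G[A^*]$ and $G[B^*]$ independently is legitimate precisely because a length-$k$ prefix of a topological order of the optimal solution realizes an admissible $A^*$ and cleanly splits the optimal arc set into its within-$A^*$, within-$B^*$, and forced-crossing parts; everything downstream is the Travelling Salesman analysis with the endpoint and pivot terms erased.
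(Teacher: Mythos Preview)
Your proposal is correct and follows essentially the same approach as the paper: formulate the bipartition recurrence for $f(S)$, then plug it into the three-level quantum-minimum-finding scheme of the \textbf{TravellingSalesman} algorithm with identical running-time analysis. Your write-up is in fact more careful than the paper's, which simply asserts the recurrence without the two-sided verification you give; in particular, your ``$\ge$'' argument via the length-$k$ prefix of an optimal topological order is exactly the missing justification, and your cross term (arcs from $B$ to $A$) matches the paper's prose description, though the paper's displayed formula has the indices reversed.
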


The best known classical algorithm for this problem is an $O^*(2^n)$ time application of dynamic programming \cite{Bodlaender2012}.

\section{Set Cover}

In this section, we show a quantum speedup for \textsc{Set Cover} using an idea similar to the \textsc{Travelling Salesman} algorithm.

\begin{theorem}
There is a bounded-error quantum algorithm that solves \textsc{Set Cover} in time \linebreak $O(\poly(m,n) 1.728^n)$.
\end{theorem}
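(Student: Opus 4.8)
The plan is to adapt the \textbf{TravellingSalesman} algorithm of Section~\ref{sec:tsp}. Write $\mathcal U$ for the universe, $n = |\mathcal U|$, $m = |\mathcal S|$, and for $U \subseteq \mathcal U$ let $g(U)$ be the minimum number of sets of $\mathcal S$ whose union contains $U$; then the answer is $g(\mathcal U)$, and $g(U) = \min_{S \in \mathcal S}\{1 + g(U \setminus S)\}$ (with $g(\emptyset) = 0$) is the classical $O(nm2^n)$ dynamic program. As for TSP, the algorithm will first classically precompute and store $g(U)$ for all $U$ with $|U| \le (1-\alpha)n/4$, for a constant $\alpha$ to be chosen, and then run three nested levels of quantum minimum finding, roughly halving the set size at the first two levels and performing an unbalanced split at the third so that the recursion lands inside the precomputed table.

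The crucial ingredient is a splitting recurrence for $g$ playing the role of Eq.~(\ref{eq:rec2}). A subtlety is that, unlike a Hamiltonian path, a minimum set cover has no natural ``midpoint of prescribed size'': if we write an optimal cover as a chain $\emptyset = U_0 \subsetneq U_1 \subsetneq \cdots \subsetneq U_t = U$ in which each step adds one set (so $t = g(U)$), the sizes $|U_i|$ need not contain $\lfloor |U|/2 \rfloor$, and the naive recurrence $g(U) = \min_{|U_m| = k}\{g(U_m) + g(U \setminus U_m)\}$ strictly over-counts, because a single set can cover elements on both sides of the cut. The fix, mimicking the shared vertex $t$ in Eq.~(\ref{eq:rec2}), is to allow one \emph{shared} set $S$ straddling the cut:
\[
g(U) = \min_{\substack{U_m \subseteq U,\ |U_m| = k\\ S \in \mathcal S}} \Big( g\big(U_m \setminus S\big) + 1 + g\big(U \setminus (U_m \cup S)\big) \Big), \qquad 1 \le k \le |U| - 1.
\]
I would prove ``$\le$'' by exhibiting the cover (sets covering $U_m \setminus S$, then $S$, then sets covering $U \setminus (U_m \cup S)$, whose union then contains $U$), and ``$\ge$'' by a refinement argument: given an optimal chain as above, let $i$ be the unique index with $|U_i| < k \le |U_{i+1}|$, pick $U_m$ with $U_i \subseteq U_m \subseteq U_{i+1}$ and $|U_m| = k$, and let $S$ be the set added at step $i \to i+1$; then $U_m \setminus S \subseteq U_i$ and $U \setminus (U_m \cup S) \subseteq U \setminus U_{i+1}$, so the right-hand side is at most $i + 1 + (t - i - 1) = t = g(U)$. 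The point is that the two recursive instances now have sizes at most $k$ and $|U| - k$, so the split size $k$ directly controls the recursion.

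Given this recurrence the algorithm is immediate: evaluate $g(\mathcal U)$ with $k = n/2$, evaluate the resulting instances of size $\le n/2$ with $k = n/4$, and evaluate those of size $\le n/4$ with $k = \alpha n/4$, at which point both children have size $\le (1-\alpha)n/4$ and are looked up in the table. Each level replaces the $\min$ over the $\binom{|U|}{k} \cdot m$ choices of $(U_m, S)$ by quantum minimum finding. The $\sqrt m$ factor contributed per level, together with the $O(nm)$ cost of each precomputed table entry, are absorbed into a $\poly(m,n)$ factor, and since the recursion has constant depth there is no $n^{O(\log n)}$ blow-up. The running time is thus
\[
\poly(m,n)\cdot\left( \binom{n}{\le (1-\alpha)n/4} + \sqrt{\binom{n}{n/2}\binom{n/2}{n/4}\binom{n/4}{\alpha n/4}}\, \right),
\]
which is precisely the expression balanced in Section~\ref{sec:tsptime}; the optimum is again $\alpha \approx 0.055362$, giving $O(\poly(m,n)\cdot 2^{0.788595 n}) = O(\poly(m,n)\cdot 1.728^n)$, where the $m$-dependence must be kept explicit since $m$ need not be polynomial in $n$.

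The step I expect to be the main obstacle is the splitting recurrence — in particular, verifying that a \emph{single} shared set always suffices to turn the (incorrect) balanced split into an exact identity, and that this remains true for split sizes $k$ far from $|U|/2$ (as used at the third level) and in the boundary cases where the crossing step $i \to i+1$ is the first or the last step of the chain. The remainder is a routine transfer from Section~\ref{sec:tsp}: correctness of nested quantum minimum finding, the entropy estimates for the binomial coefficients, and bookkeeping of the $\poly(m,n)$ factors.
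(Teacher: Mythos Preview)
Your splitting recurrence is correct, and your approach works, but it is genuinely different from the paper's. The paper does \emph{not} introduce a shared bridging set. Instead it first recursively peels off every set of size larger than $d=\beta n$ (for a parameter $\beta$ sent to $0$), so that in the residual instance every optimal cover, written as a chain $\emptyset=U_0\subsetneq\cdots\subsetneq U_t=\mathcal U$, has $|U_{i+1}|-|U_i|\le d$; hence some $U_i$ lands in each ``band'' of width $d$ around $n/2$, $n/4$, $\alpha n/4$. It then uses the exact decomposition $f(\mathcal U)=f(P)+f(\mathcal U\setminus P)$ valid whenever $P$ is a genuine partial cover of an optimum, and runs quantum minimum finding over the bands. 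Your device of minimising over a single explicit set $S\in\mathcal S$ that straddles the cut replaces all of this machinery: it absorbs the ``large jump'' that the paper eliminates by preprocessing, so you need neither the recursion on big sets, nor the bands, nor the auxiliary parameter $\beta$. The price is that your two recursive instances have size \emph{at most} $k$ and $|U|-k$ rather than exactly those, so you must add a line of case analysis (if the child already has size $\le(1-\alpha)n/4$, look it up; if it has size in $((1-\alpha)n/4,\,n/4)$, skip the middle level and split directly with $k=\alpha n/4$); since $\binom{a}{j}$ is monotone in $a$ for $a\ge j$, this stays inside your stated bound. Your argument is the more elementary one; the paper's has the virtue that its key identity $f(\mathcal U)=f(P)+f(\mathcal U\setminus P)$ is exact and symmetric, which may be why the authors preferred it.
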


Let $\mathcal C \subseteq \mathcal S$ be a minimum set cover of $\mathcal U$.
We call a set $P$ its \emph{partial} cover if $P = \bigcup_{S \in \mathcal C'} S$ for some $\mathcal C' \subseteq \mathcal C$.
Let $d = \beta n$ be a parameter of the algorithm, $0 < \beta < 1$.
We also define \emph{bands} for $i \in \{1,2,3\}$,
$$B_i = \{X \subseteq \mathcal U \mid ||X|-w_in|\leq d/2\},$$
where $w_3 = 1/2$, $w_2 = 1/4$ and $w_1 = \alpha/4$ for some parameter $\alpha \in (0,1/2]$ to be specified later.

The main idea of the algorithm is that if $|S| \leq d$ for all sets $S \in \mathcal S$, then each of the bands contains a partial cover of any minimum set cover.
Then for each band, we can search for the best partial cover using Grover's search.
This is done similarly as in the \textsc{Travelling Salesman} algorithm.

Let $f(\mathcal U, \mathcal S)$ to be the size of the minimum set cover of $\mathcal U$ using sets of $\mathcal S$.
The algorithm relies on the following fact.

\begin{proposition}
If $P$ is a partial cover of a minimum set cover of $\mathcal U$ using sets of $\mathcal S$, then
$$f(\mathcal U,\mathcal S) = f(P,\mathcal S) + f(\mathcal U \setminus P,\mathcal S).$$
\end{proposition}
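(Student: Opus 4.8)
The plan is to establish the two inequalities
$f(\mathcal U,\mathcal S)\le f(P,\mathcal S)+f(\mathcal U\setminus P,\mathcal S)$ and
$f(\mathcal U,\mathcal S)\ge f(P,\mathcal S)+f(\mathcal U\setminus P,\mathcal S)$
separately, interpreting $f(A,\mathcal S)$ as the least number of sets of $\mathcal S$ whose union contains $A$.

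The ``$\le$'' direction is the easy half and does not use the hypothesis that $P$ comes from a minimum cover: if $\mathcal C_1\subseteq\mathcal S$ is an optimal cover of $P$ (so $|\mathcal C_1|=f(P,\mathcal S)$) and $\mathcal C_2\subseteq\mathcal S$ is an optimal cover of $\mathcal U\setminus P$, then $\mathcal C_1\cup\mathcal C_2$ covers $P\cup(\mathcal U\setminus P)=\mathcal U$, whence $f(\mathcal U,\mathcal S)\le|\mathcal C_1\cup\mathcal C_2|\le|\mathcal C_1|+|\mathcal C_2|=f(P,\mathcal S)+f(\mathcal U\setminus P,\mathcal S)$.

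For the ``$\ge$'' direction I would fix a minimum set cover $\mathcal C\subseteq\mathcal S$ of $\mathcal U$ with $|\mathcal C|=f(\mathcal U,\mathcal S)$, together with the subfamily $\mathcal C'\subseteq\mathcal C$ witnessing that $P$ is a partial cover, i.e.\ $P=\bigcup_{S\in\mathcal C'}S$. Then $\mathcal C'$ covers $P$, so $f(P,\mathcal S)\le|\mathcal C'|$. The key observation is that $\mathcal C\setminus\mathcal C'$ already covers $\mathcal U\setminus P$: any $x\in\mathcal U\setminus P$ lies in some set of $\mathcal C$ (since $\mathcal C$ covers $\mathcal U$), but it cannot lie only in sets of $\mathcal C'$, as that would force $x\in\bigcup_{S\in\mathcal C'}S=P$; hence $x$ lies in a set of $\mathcal C\setminus\mathcal C'$. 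Therefore $f(\mathcal U\setminus P,\mathcal S)\le|\mathcal C\setminus\mathcal C'|=|\mathcal C|-|\mathcal C'|$, and adding the two bounds gives $f(P,\mathcal S)+f(\mathcal U\setminus P,\mathcal S)\le|\mathcal C'|+|\mathcal C|-|\mathcal C'|=f(\mathcal U,\mathcal S)$. Combining the two inequalities yields the claimed identity.

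There is no real obstacle here; the only point to watch is that the hypothesis ``$P$ is a partial cover of a minimum set cover'' is genuinely required for the ``$\ge$'' direction — it fails for an arbitrary $P$, e.g.\ when $P$ splits a single covering set — and it is used exactly to guarantee that the leftover family $\mathcal C\setminus\mathcal C'$ covers everything outside $P$ with no extra sets.
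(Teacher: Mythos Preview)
Your proof is correct and follows essentially the same two-inequality approach as the paper. The only cosmetic difference is that the paper splits $\mathcal C$ as $\mathcal C(P)=\{S\in\mathcal C\mid S\subseteq P\}$ and its complement rather than using the witnessing subfamily $\mathcal C'$ directly, but the logic is identical.
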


\begin{proof}
Suppose that $P$ is a partial cover of a minimum set cover $\mathcal C$.
Let $\mathcal C(P) = \{S \in \mathcal C \mid S \subseteq P\}$ and $\mathcal C(\mathcal U \setminus P) = \mathcal C \setminus \mathcal C(P)$.
Also, let $\mathcal C'(P)$ and $\mathcal C'(\mathcal U \setminus P)$ be the minimum covers of $P$ and $\mathcal U \setminus P$ using $\mathcal S$, respectively.
Since $|\mathcal C'(P)| \leq |\mathcal C(P)|$ and $|\mathcal C'(\mathcal U \setminus P)| \leq |\mathcal C(\mathcal U \setminus P)|$, we have $f(P,\mathcal S) + f(\mathcal U \setminus P,\mathcal S) \leq |\mathcal C|$.
Moreover, $\mathcal C' = \mathcal C'(P) \cup \mathcal C'(\mathcal U \setminus P)$ is a set cover of $\mathcal U$, therefore it is also a minimum set cover.
\end{proof}

If there are sets $S$ such that $|S| > d$, we recursively solve the minimum set cover of $\mathcal U \setminus S$ using sets $\mathcal S \setminus \{S\}$.
Let $N$ be the number of elements of $\mathcal U$ in the very first call of the algorithm.
Suppose that the quantum algorithm above runs in time $O^*(\delta^n)$ for some $\delta = 2^c$.
Then if at any point $|\mathcal U|$ becomes less than $c N$, we run the classical dynamic programming algorithm instead (which runs in time $O^*(2^n)$).

\begin{algorithm}[H]
\textbf{MinCover}(universe $\mathcal U$, collection of sets $\mathcal S \subseteq 2^{\mathcal U}$): the size $|\mathcal C|$ of a minimum set cover $\mathcal C \subseteq \mathcal S$ of $\mathcal U$.
\begin{enumerate}
\item \label{itm:sc1} If $n = |\mathcal U| < c N$, find the minimum set cover using classical dynamic programming and return the answer.
\item \label{itm:sc2} Iterate over all sets $S \in \mathcal S$.
For all sets $S$ such that $|S| \geq d = \beta n$, find $\textbf{MinCover}(\mathcal U \setminus S, \{T \setminus S \mid T \in \mathcal S, T \setminus S \neq \varnothing\})$.
\item \label{itm:sc3}Remove all sets $|S| > d$ from $\mathcal S$.
Since now all sets have size at most $d$, each band $B_i$ must contain a partial cover of any minimum set cover.
\item \label{itm:sc4}Calculate the size of the minimum set cover $f(P,\mathcal S)$ for all sets $|P| \leq (1-\alpha) n/4+d/2$ using classical dynamic programming.
\item \label{itm:sc5}Run quantum minimum finding over subsets $P \in B_3$ to find the answer,
$$f(\mathcal U, \mathcal S) = \min_{P \in B_3}\{ f(P,\mathcal S) + f(\mathcal U\setminus P,\mathcal S)\}.$$
Note that $\mathcal U \setminus P \in B_3$.
To calculate $f(P,\mathcal S)$ for a set $P \in B_i$ such that $i \in \{2, 3\}$, run quantum minimum finding over the subsets of $P$ in $B_{i-1}$ to find $$f(P,\mathcal S) = \min_{Q \in B_{i-1} \atop Q \subset P}\{ f(Q,\mathcal S) + f(P \setminus Q,\mathcal S)\}.$$
If $||P| - \alpha n/4| \leq d/2$ (that is, if $P \in B_1$) or $||P| - (1-\alpha) n/4| \leq d/2$, then $f(P,\mathcal S)$ is known from the preprocessing Step \ref{itm:sc4}.
\item Pick the best solution from Steps \ref{itm:sc2} and \ref{itm:sc5} and return the answer.
\end{enumerate}
\caption{Quantum algorithm for \textsc{Set Cover}}
\end{algorithm}

\subsection{Running Time}

First, we estimate the maximum depth of the recursion $t$.
If the algorithm is called recursively at Step \ref{itm:sc2}, the number of elements in the universe decreases by $d=\beta n$.
Hence we must have $(1-\beta)^t N < c N$, which implies that $t < \log_{1-\beta} c = O(1)$.
Hence for a fixed $\beta$, the depth of the recursion is constant.

At each recursive call, there is a $\poly(m,n)$ cost of iterating over subsets from $\mathcal S$ or iterating over elements of some $S \in \mathcal S$.
Therefore, there is only a $\poly(m,n)^t = \poly(m,n)$ additional subexponential multiplicative cost.

If the algorithm terminates at Step \ref{itm:sc1}, then it runs in $2^{cn} = \delta^n$ exponential time.
Recall that $\delta^n$ is also the exponential complexity of the main quantum subroutine.
We now estimate $\delta$.
The classical preprocessing in Step \ref{itm:sc4} requires exponential time
$${n \choose \leq (1-\alpha) n/4+d/2} \leq 2^{\be\left(\frac{1-\alpha}{4}+\frac{\beta}{2}\right)n}.$$
The quantum subroutine in Step \ref{itm:sc5} requires exponential time at most
\begin{align*}
& \sqrt{\left(\sum_{i=-d/2}^{d/2} {n \choose n/2+i}\right) \left(\sum_{i=-d/2}^{d/2} {n/2+d/2 \choose n/4+i}\right) \left(\sum_{i=-d/2}^{d/2} {n/4+d/2 \choose \alpha n/4 + i}\right)} \\
 \leq~& \sqrt{{n \choose \leq n/2+d/2} {n/2+d/2 \choose \leq n/4+d/2}{n/4+d/2 \choose \leq \alpha n/4 + d/2}}\\
 \leq~& 2^{\frac 1 2 \left( 1 +  \frac{1+\beta}{2} + \be\left(\frac{\alpha+2\beta}{1+2\beta}\right)\frac{1+2\beta}{4} \right)n}.
\end{align*}

If we set $\beta$ to be negligibly small and then balance the preprocessing and quantum running times, we obtain the same solution for $\alpha$ as for the \textbf{TravellingSalesman} algorithm in Section \ref{sec:tsptime}, $\alpha = 0.055362$.
The running time of the algorithm then is $O(\poly(m,n)2^{0.788595n}) = O(\poly(m,n)1.727391...^n)$.

\printbibliography

\end{document}